\definecolor{darkblue}{rgb}{0,0,0.45}
\definecolor{darkred}{rgb}{0.6,0,0}
\definecolor{darkgreen}{rgb}{0.13,0.5,0}
\newtheorem{theorem}{Theorem}
\newtheorem{lemma}[theorem]{Lemma}
\newtheorem{claim}[theorem]{Claim}
\newenvironment{claimproof}[0]{\noindent{}\emph{Proof of claim.}}{\hfill $\diamondsuit$\medskip}
\newcommand{\Oh}[0]{{\mathcal{O}}}
\newcommand{\Ll}[0]{\mathcal{L}}
\newcommand{\Kk}[0]{\mathcal{K}}
\newcommand{\Pp}[0]{\mathcal{P}}
\newcommand{\Dd}[0]{\mathcal{D}}
\newcommand{\Qq}[0]{\mathcal{Q}}
\newcommand{\speq}[0]{\supseteq}
\DeclareMathOperator{\fvs}{fvs}
\DeclareMathOperator{\cp}{cp}
\DeclareMathOperator{\dtw}{dtw}
\newcommand{\df}[0]{\ensuremath{:=}}
\newcommand\footnoteref[1]{\protected@xdef\@thefnmark{\ref{#1}}\@footnotemark}
\begin{document}
\title{Packing Directed Cycles Quarter- and Half-Integrally\footnote{
    An extended abstract of this manuscript appeared at \emph{European Symposium on Algorithms 2019}~\cite{ESAep}.
}}

\author{Tom\'{a}\v{s} Masa\v{r}\'{i}k\footnote{Department of Applied Mathematics, Charles University, Czech Republic \& Faculty of Mathematics, Informatics and Mechanics, University of Warsaw, Poland, \texttt{masarik@kam.mff.cuni.cz}} 
  \and Irene Muzi\footnote{Technische Universit\"{a}t Berlin, Germany, \texttt{irene.muzi@tu-berlin.de}} 
    \and Marcin Pilipczuk\footnote{Faculty of Mathematics, Informatics and Mechanics, University of Warsaw, Poland, \texttt{malcin@mimuw.edu.pl}}
    \and Pawe\l{} Rz\k{a}\.{z}ewski\footnote{Faculty of Mathematics and Information Science, Warsaw University of Technology, Warsaw, Poland, \texttt{p.rzazewski@mini.pw.edu.pl}}
    \and Manuel Sorge\footnote{Faculty of Mathematics, Informatics and Mechanics, University of Warsaw, Poland, \texttt{manuel.sorge@mimuw.edu.pl}}
}
\date{}

\maketitle

\begin{abstract}
The celebrated Erd\H{o}s-P\'{o}sa theorem states that every undirected graph that does not admit 
a family of $k$ vertex-disjoint cycles contains a feedback vertex set (a set of vertices hitting all cycles in the graph)
of size $\Oh(k \log k)$. 
After being known for long as Younger's conjecture, a similar statement for directed graphs has
been proven in 1996 by Reed, Robertson, Seymour, and Thomas.
However, in their proof, the dependency of the size of the feedback vertex set on the size of vertex-disjoint cycle packing
is not elementary. 

We show that if we compare the size of a minimum feedback vertex set in a directed graph with \emph{quarter-integral} cycle packing
number, we obtain a polynomial bound. More precisely, we show that if in a directed graph $G$ there is no family of
$k$ cycles such that every vertex of $G$ is in at most \emph{four} of the cycles, then there exists a
feedback vertex set in $G$ of size $\Oh(k^4)$. 
Furthermore, a variant of our proof shows that 
if in a directed graph $G$ there is no family of $k$ cycles such that every vertex of $G$ is in at most \emph{two} of the cycles, then there exists a
feedback vertex set in $G$ of size $\Oh(k^6)$. 

On the way there we prove a more general result about quarter-integral packing of subgraphs of high directed treewidth:
for every pair of positive integers $a$ and $b$, if a directed graph $G$ has directed treewidth $\Omega(a^6 b^{8} \log^2(ab))$, 
then one can find in $G$ a family of $a$ subgraphs, each of directed treewidth at least $b$, such that
every vertex of $G$ is in at most four subgraphs.
\end{abstract}

\section{Introduction}

The theory of graph minors, developed over the span of over 20 years by Robertson and Seymour, had a tremendous impact on 
the area of graph algorithms. 
Arguably, one of the cornerstone contributions is the notion of \emph{treewidth}~\cite{RobertsonS84} and the deep understanding
of obstacles to small treewidth, primarily in the form of the \emph{excluded grid theorem}~\cite{grid-minor-poly,RobertsonS86,RobertsonST94}.

Very tight relations of treewidth and the size of the largest grid as a minor in sparse graph classes, such
as planar graphs or graphs excluding a fixed graph as a minor, led to the rich and fruitful theory of bidimensionality~\cite{DemaineH08}.
In general graphs, fine understanding of the existence of well-behaved highly-connected structures (not necessarily grids)
in graphs of high treewidth has been crucial to the development of efficient approximation algorithms
for the \textsc{Disjoint Paths} problem~\cite{ChuzhoyL16}.

In undirected graphs, one of the first theorems that gave some well-behaved structure in a graph 
that is in some sense highly connected is the famous Erd\H{o}s-P\'{o}sa theorem~\cite{ErdosP65} linking the feedback vertex set number
of a graph (the minimum number of vertices one needs to delete to obtain an acyclic graph) and the cycle packing number
(the maximum possible size of a family of vertex-disjoint cycles in a graph). 
The Erd\H{o}s-P\'{o}sa theorem states that a graph that does not contain a family of $k$ vertex-disjoint cycles
has feedback vertex set number bounded by $\Oh(k \log k)$.

A similar statement for directed graphs, asserting that a directed graph without a family of $k$ vertex-disjoint cycles
has feedback vertex set number at most $f(k)$, has been long known as the Younger's conjecture until finally proven
by Reed, Robertson, Seymour, and Thomas in 1996~\cite{ReedRST96}. 
However, the function $f$ obtained in~\cite{ReedRST96} is not elementary; in particular, the proof relies on the Ramsey theorem
for $\Theta(k)$-regular hypergraphs. 
This is in contrast with the (tight) $\Theta(k \log k)$ bound in undirected graphs. 

Our main result is that if one compares the feedback vertex set number of a directed graph
to the \emph{quarter-integral} and \emph{half-integral} cycle packing number (i.e., the maximum size of a family of cycles in $G$
such that every vertex lies on at most four resp. two cycles), one obtains a polynomial bound.

\begin{theorem}\label{thm:ep-all}
Let $G$ be a directed graph that does not contain a family of $k$ cycles such that every vertex in $G$ is contained
in at most $p$ cycles. 
\begin{compactenum}[a)]
\item If $p=4$, then there exists a feedback vertex set in $G$ of size $\Oh(k^4)$,
\item If $p=3$, then there exists a feedback vertex set in $G$ of size $\Oh(k^5)$,
\item If $p=2$, then there exists a feedback vertex set in $G$ of size $\Oh(k^6)$.
\end{compactenum}
\end{theorem}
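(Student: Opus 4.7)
My plan for proving Theorem~\ref{thm:ep-all} is to exploit the intermediate structural result about quarter-integral packings of subgraphs of high directed treewidth (stated at the end of the abstract) via an iterative cycle-extraction argument. I focus on part~(a); parts~(b) and~(c) follow by analogous arguments with more restrictive packing budgets that account for the larger exponents.

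I would argue by contrapositive: assume $\fvs(G) > C \cdot k^4$ for a suitable constant $C$, and construct a quarter-integral packing of $k$ cycles in $G$. The argument proceeds in at most $k$ stages. At stage $i$, I maintain a partial quarter-integral cycle packing $\mathcal{C}_i = \{C_1, \dots, C_i\}$ together with an auxiliary ``used'' vertex set $U_i$ of size $\Oh(i \cdot k^3)$. The invariant is that $G - U_i$ still has a sufficiently rich cyclic structure --- expressed as large $\dtw$ or, equivalently, as a suitable directed well-linked set --- to apply the intermediate theorem once more.

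At each stage, I would apply the intermediate theorem with $a = 1$ and $b$ tuned so that it yields a single high-dtw subgraph $H_{i+1} \seq G - U_i$ from which a new cycle $C_{i+1}$ can be extracted. The cycle $C_{i+1}$ is chosen so that no vertex is used more than four times across $\mathcal{C}_{i+1}$; vertices of $\mathcal{C}_i$ that would otherwise create overuse are charged to $U_{i+1}$, at a cost of $\Oh(k^3)$ per stage. After $k$ successful stages we reach a packing of size $k$, contradicting the hypothesis, so the process must halt earlier --- and at that point the accumulated $U_i$ (of total size $\Oh(k^4)$) serves as a feedback vertex set of the desired size.

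The main obstacle, and likely the most delicate step, is the bridge from the assumption $\fvs(G) > C \cdot k^4$ to a structural property (large directed treewidth or a directed bramble of appropriate order) enabling the invocation of the intermediate theorem at every stage. In undirected graphs, treewidth is readily lower-bounded by a function of the feedback vertex set; in the directed setting this is not automatic, so a separate combinatorial argument must be made, presumably by iterating residue arguments or invoking structural theorems on directed graphs with large feedback vertex sets. For parts~(b) and~(c), an analogous iteration in which each stage removes $\Oh(k^4)$ resp.\ $\Oh(k^5)$ vertices to respect the stricter packing budget delivers the bounds $\Oh(k^5)$ and $\Oh(k^6)$.
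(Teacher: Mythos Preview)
Your proposal has a genuine gap and also diverges substantially from the paper's route.

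\textbf{The paper's proof.} Theorem~\ref{thm:ep-all} is obtained as an immediate corollary of two ingredients: Theorem~\ref{thm:dtw-ep-all}, which shows that the absence of a congestion-$p$ packing of $k$ cycles forces $\dtw(G)=\Oh(k^3)$ (resp.\ $\Oh(k^4)$, $\Oh(k^5)$), and Lemma~\ref{lem:dtw-cp-fvs}, which gives $\fvs(G)\le(\dtw(G)+1)(\cp(G)+1)$. Since the absence of a congestion-$p$ packing trivially implies $\cp(G)<k$, multiplying the two bounds yields Theorem~\ref{thm:ep-all}. There is no iteration and no extraction of cycles one at a time; the hard work is entirely in Theorem~\ref{thm:dtw-ep-all}, proved in Section~\ref{sec:imp}.

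\textbf{Where your argument breaks.} Invoking the intermediate theorem (Theorem~\ref{thm:qp}) with $a=1$ is vacuous: it merely asserts that a graph of large directed treewidth contains a subgraph of large directed treewidth, which is the graph itself. It gives you no control whatsoever over where the extracted cycle lies or how it interacts with the previously chosen cycles. Consequently, your ``charge $\Oh(k^3)$ vertices to $U_{i+1}$ per stage'' step is unsupported: the cycle $C_{i+1}$ could be arbitrarily long and could intersect earlier cycles in arbitrarily many vertices, so neither the size of $U_i$ nor the congestion of the partial packing is under control. The obstacle you flag --- going from large $\fvs$ to large $\dtw$ --- is in fact the easy direction (contrapositive of Lemma~\ref{lem:dtw-cp-fvs} together with $\cp(G)<k$); the real content is the converse bound on $\dtw$, which your outline does not address.

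If you instead applied Theorem~\ref{thm:qp} with $a=k$ and $b=2$ in one shot (no iteration), you would get a valid quarter-integral packing of $k$ cycles, but only under the hypothesis $\dtw(G)=\Omega(k^6\log^2 k)$, yielding $\fvs(G)=\Oh(k^7\log^2 k)$ rather than $\Oh(k^4)$; the paper explicitly notes this weaker consequence and proves Theorem~\ref{thm:dtw-ep-all} separately to obtain the stated bounds.
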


We remark that if one relaxes the condition even further to a \emph{fractional cycle packing},%
\footnote{A \emph{fractional cycle packing} assigns to every cycle $C$ in $G$ a non-negative real weight $w(C)$ such that for every $v \in V(G)$ the total weight of all cycles containing $v$ is at most $1$. The \emph{size} of a fractional cycle packing is the sum of the weights of all cycles in the graph.}
Seymour~\cite{Seymour95} proved that a directed graph without a fractional cycle packing of size at least
$k$ admits a feedback vertex set of size $\Oh(k \log k \log \log k)$.

\medskip

\emph{Directed treewidth} is a directed analog of the successful notion of treewidth, introduced in~\cite{JohnsonRST01,Reed99}.
An analog of the excluded grid theorem for directed graphs has been conjectured by Johnson, Roberston, Seymour,
and Thomas~\cite{JohnsonRST01} in 2001 and finally proven by Kawarabayashi and Kreutzer in 2015~\cite{DBLP:conf/stoc/KawarabayashiK15}.
Similarly as in the case of the directed Erd\H{o}s-P\'{o}sa property, the relation between the directed treewidth
of a graph and a largest directed grid as a minor in~\cite{DBLP:conf/stoc/KawarabayashiK15} is not elementary.

For a directed graph $G$, let $\fvs(G)$, $\dtw(G)$, and $\cp(G)$ denote the feedback vertex set number,
directed treewidth, and the cycle packing number of $G$, respectively. The following lemma is a restatement of the result of Amiri, Kawarabayashi, Kreutzer, and Wollan~\cite[Lemma 4.2]{AmiriKKW16}: 

\begin{lemma}[{\cite[Lemma 4.2]{AmiriKKW16}}]
\label{lem:AKKW}
Let $G$ be a directed graph with $\dtw(G) \leq w$. For each strongly connected directed graph $H$, the graph $G$ has either $k$ disjoint copies of $H$ as a topological minor, or contains a set $T$ of at most $k \cdot (w+1)$ vertices such that $H$ is not a topological minor of $G-T$. 
\end{lemma}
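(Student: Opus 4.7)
The plan is to prove the lemma by induction on $k$, using at each step a carefully chosen bag of a directed tree decomposition of $G$ to hit one topological minor copy of $H$ and then recursing on the smaller graph.

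For the base case $k=1$: either $H$ is a topological minor of $G$, yielding one disjoint copy, or $T=\emptyset$ suffices. For $k\geq 2$, fix a directed tree decomposition of $G$ of width at most $w$ with underlying arborescence $R$; in particular every bag $B_t$ satisfies $|B_t|\leq w+1$. If $H$ is not a topological minor of $G$ then $T=\emptyset$ works; otherwise pick a topological minor copy $H_1$ of $H$ in $G$. Since $H$ is strongly connected and any subdivision of a strongly connected digraph is strongly connected, $H_1$ is strongly connected as well.

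The key ingredient I would invoke here is the standard \emph{localization property} of directed tree decompositions: for every strongly connected subgraph $Z$ of $G$ there exists a node $t(Z)$ of $R$ such that $V(Z)$ is contained in the union of the bags indexed by the subtree of $R$ rooted at $t(Z)$, and $V(Z)\cap B_{t(Z)}\neq\emptyset$. Applied to $H_1$ this yields a node $t_1$ and a set $T_1:=B_{t_1}$ of size at most $w+1$ that meets $V(H_1)$. Let $G':=G-T_1$; deleting vertices cannot increase directed treewidth, so $\dtw(G')\leq w$. Moreover $G'$ cannot contain $k-1$ vertex-disjoint topological minor copies of $H$, for together with $H_1$ (which is disjoint from them because it meets $T_1$) they would yield $k$ disjoint copies in $G$, contradicting the hypothesis. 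The inductive hypothesis applied to $G'$ and $k-1$ produces $T'\subseteq V(G')$ of size at most $(k-1)(w+1)$ such that $H$ is not a topological minor of $G'-T'$; the set $T:=T_1\cup T'$ then has size at most $k(w+1)$ and hits every topological minor copy of $H$ in $G$, since each such copy either meets $T_1$ or is contained in $G'$ and hence meets $T'$.

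The main (and essentially only) nontrivial point in this plan is the localization property used above. It is a standard consequence of the guard condition in the definition of a directed tree decomposition: for each arc $(t,s)$ of $R$ the guard $\Gamma_{(t,s)}$ lies inside the parent bag and separates the subtree rooted at $s$ from the rest of $G$, so a strongly connected $Z$ contained in the subtree rooted at $t$ but missing $B_t$ must live in a single subtree rooted at some child of $t$. Iterating this descent terminates at a node $t(Z)$ where $V(Z)\cap B_{t(Z)}\neq\emptyset$, as required. With this property in place the argument above is a direct ``hit-one, recurse'' induction in the spirit of the classical Erd\H{o}s--P\'osa proof.
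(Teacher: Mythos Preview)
The paper does not prove this lemma; it is quoted from~\cite{AmiriKKW16} without argument. Your inductive scheme is the right shape, but the disjointness step contains a genuine error. You choose an \emph{arbitrary} copy $H_1$, set $T_1 = B_{t(H_1)}$, and then claim that any $k-1$ pairwise disjoint copies living in $G' = G - T_1$ are disjoint from $H_1$ ``because $H_1$ meets $T_1$''. This does not follow: $H_1$ will typically have vertices outside $T_1$ as well, and nothing prevents one of the $k-1$ copies in $G'$ from sharing such a vertex with $H_1$. So you cannot combine them into $k$ disjoint copies in $G$, and the induction collapses.

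The fix is to choose $t_1$ more carefully, not as the localization node of an arbitrary copy but as a \emph{lowest} node of $R$ for which the subgraph induced on $\bigcup_{t'\text{ below }t_1}\beta(t')$ still contains a topological copy of $H$; take $H_1$ to be such a copy. After deleting $B_{t_1}$, any surviving copy $H'$ is strongly connected and avoids every guard $\gamma(t_1,t_i)\subseteq B_{t_1}$ on the edges to the children $t_i$; by the guard condition $H'$ therefore lies entirely inside some child subtree or entirely outside the subtree at $t_1$. The former contradicts the minimality of $t_1$, so every copy of $H$ in $G'$ is disjoint from the whole subtree at $t_1$. Since $V(H_1)$ is contained in that subtree, $H_1$ is now genuinely disjoint from any family of copies in $G'$, and your ``hit one, recurse'' induction goes through with the same arithmetic.
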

\noindent Note that the authors of~\cite{AmiriKKW16} prove Lemma~\ref{lem:AKKW} for both topological and butterfly minors, but the previous restatement is sufficient for our purposes.

By taking $H$ as the directed 2-cycle it is easy to derive the following bound:

\begin{lemma}\label{lem:dtw-cp-fvs}
For a directed graph $G$ it holds that 
\[
  \fvs(G)\le(\dtw(G)+1)(\cp(G)+1).
\]
\end{lemma}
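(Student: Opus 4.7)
The plan is to apply Lemma~\ref{lem:AKKW} with $H$ set to the \emph{directed 2-cycle} $C_2$ on two vertices (i.e., two vertices joined by arcs in both directions), which is strongly connected as required.

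The key observation I would establish first is that for any directed graph $G'$, the graph $G'$ contains a directed cycle if and only if $C_2$ is a topological minor of $G'$. One direction is trivial: $C_2$ itself is a cycle. For the other direction, given any directed cycle $v_1 \to v_2 \to \dots \to v_\ell \to v_1$ in $G'$, picking any two distinct vertices $v_i, v_j$ on it and splitting the cycle at these vertices yields two internally vertex-disjoint directed paths, one from $v_i$ to $v_j$ and one from $v_j$ to $v_i$, giving a topological embedding of $C_2$. In particular, $k$ vertex-disjoint topological copies of $C_2$ correspond exactly to $k$ vertex-disjoint directed cycles in $G'$.

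Now I would instantiate Lemma~\ref{lem:AKKW} with $w \df \dtw(G)$, $H \df C_2$, and $k \df \cp(G)+1$. By definition of $\cp(G)$, the graph $G$ cannot contain $k$ vertex-disjoint directed cycles, hence by the observation above cannot contain $k$ disjoint topological copies of $C_2$. Therefore the lemma yields a set $T \seq V(G)$ with $|T| \le k(w+1) = (\cp(G)+1)(\dtw(G)+1)$ such that $C_2$ is not a topological minor of $G - T$. By the observation again, $G - T$ is acyclic, so $T$ is a feedback vertex set of $G$, giving the claimed bound.

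The only step requiring any care is the equivalence between containing a directed cycle and containing $C_2$ as a topological minor; everything else is a direct invocation of Lemma~\ref{lem:AKKW}. There is no real obstacle, since the choice $H = C_2$ is dictated by wanting to capture cycles and the lemma is already tailored to produce exactly the kind of hitting set we need.
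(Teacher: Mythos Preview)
Your proposal is correct and matches the paper's approach exactly: the paper simply states that the bound follows from Lemma~\ref{lem:AKKW} by taking $H$ to be the directed 2-cycle, and you have spelled out precisely this argument. The only minor caveat is that the equivalence ``$G'$ has a directed cycle iff $C_2$ is a topological minor of $G'$'' tacitly assumes cycles have length at least two (no loops), which is the standard convention here.
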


In the light of Lemma~\ref{lem:dtw-cp-fvs} and 
since  a directed grid minor of size $k$ contains $k$ vertex-disjoint cycles, the directed grid theorem
of Kawarabayashi and Kreutzer~\cite{DBLP:conf/stoc/KawarabayashiK15} is a generalization of the directed Erd\H{o}s-P\'{o}sa property
due to Reed, Robertson, Seymour, and Thomas~\cite{ReedRST96}.

Theorem~\ref{thm:ep-all} is a direct corollary of Lemma~\ref{lem:dtw-cp-fvs} and the following statement that we prove.
\begin{theorem}\label{thm:dtw-ep-all}
Let $G$ be a directed graph that does not contain a family of $k$ cycles such that every vertex in $G$ is contained
in at most $p$ cycles. 
\begin{compactenum}[a)]
\item If $p=4$, then $\dtw(G) = \Oh(k^3)$,
\item If $p=3$, then $\dtw(G) = \Oh(k^4)$,
\item If $p=2$, then $\dtw(G) = \Oh(k^5)$.
\end{compactenum}
\end{theorem}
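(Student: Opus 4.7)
I would prove the contrapositive in each case: if $\dtw(G)$ is at least a suitable polynomial in $k$, then $G$ contains a family of $k$ cycles with vertex-multiplicity at most $p$. The engine of the argument is the auxiliary structural theorem advertised in the abstract, which converts large directed treewidth into a quarter-integral packing of high-treewidth subgraphs.

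For part (a) the reduction is almost direct. I would apply the packing theorem with $a = k$ and $b$ taken to be a constant large enough to force each packed subgraph to contain a cycle (already $b = 1$ suffices, since any subgraph of positive directed treewidth contains a directed cycle). Selecting one cycle inside each of the $k$ subgraphs produces a family of $k$ cycles with every vertex of $G$ used at most four times, as required. The delicate point here is that the general structural theorem costs $\Omega(a^6 b^8 \log^2(ab))$ in directed treewidth, whereas Theorem~\ref{thm:dtw-ep-all}(a) promises only $\Oh(k^3)$; hence I would have to specialise the proof of the structural theorem to the trivial case $b = \Oh(1)$, where the nested partitioning into high-treewidth pieces collapses to simply locating returning paths, and the log factors and high exponents should disappear.

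For parts (b) and (c) I would not try to rebuild the structural theorem with a smaller multiplicity; instead I would take the quarter-integral packing from (a), applied with a larger parameter $k' = \mathrm{poly}(k)$, and then \emph{uncross} cycles locally in order to reduce the vertex-multiplicity from four to three and then to two. Concretely, whenever a vertex $v$ lies on more than $p$ cycles of the current family, two cycles through $v$ can be concatenated and then split into two new cycles avoiding $v$, provided the host subgraph around $v$ has enough connectivity to reroute; this is exactly what the high directed treewidth inside each subgraph of the initial packing provides. The extra directed treewidth demanded by Theorem~\ref{thm:dtw-ep-all}(b) and (c) over (a) is the budget needed to afford these uncrossings.

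The step I expect to be the main obstacle is this uncrossing argument. Each uncrossing lowers the multiplicity at one vertex but threatens to raise it at the vertices of the rerouted detours; controlling this potential globally requires charging each rerouting to a fresh portion of the host subgraph's treewidth and proving that the process terminates with the desired bound. A secondary obstacle is matching the exact polynomial exponents $\Oh(k^3),\Oh(k^4),\Oh(k^5)$: these should come from a careful tally of how much the parameter $k'$ in the initial packing must be inflated so that, after all uncrossings, at least $k$ cycles survive with multiplicity $p$.
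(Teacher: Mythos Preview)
Your plan has genuine gaps in all three parts.

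For part~(a), the claim that specialising Theorem~\ref{thm:qp} to $b=\Oh(1)$ will collapse the exponent from $6$ to $3$ is not supported: plugging $b=2$ into the hypothesis $\Omega(a^6b^8\log^2(ab))$ still yields $\Omega(k^6\log^2 k)$, and nothing in the proof of Theorem~\ref{thm:qp} becomes trivially cheaper when $b$ is constant. The main cost there is the invocation of Lemma~\ref{lem:path-system}, whose quadratic loss $\alpha^2\beta^2$ does not vanish for small $\beta$. The paper obtains $\Oh(k^3)$ by \emph{replacing} Lemma~\ref{lem:path-system} entirely: from a well-linked set of size $\Theta(k^3)$ one takes a linkage and its backlinkage, observes that the auxiliary graph $\auxg(\Ll,\Ll^\backlink)$ has fewer than $k$ cycles (otherwise those cycles already give a half-integral packing), and then cuts the remaining long cycles of $\auxg$ into $a=2k$ congestion-$2$ walks carrying well-linked endpoint sets of size $b=\Theta(k^2)$ each (Lemma~\ref{lem:walk-system}). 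The sparse/dense dichotomy of Section~\ref{sec:main} is then rerun on these walks with linear-size linkages, and the congestion is $2+2=4$ in the dense case. You do not identify this bypass of Lemma~\ref{lem:path-system}, which is the entire content of the improvement.

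For parts~(b) and~(c), your uncrossing proposal is not how the paper proceeds, and as stated it does not work. When two directed cycles share a vertex $v$, splicing them produces either a single longer closed walk or two new cycles, but there is no mechanism ensuring the multiplicity at $v$ actually drops, and nothing controls the multiplicity along the reroutes; high directed treewidth of the ambient subgraph gives you no local rerouting capability at a prescribed vertex. The paper instead lowers congestion \emph{before} building the cycles, via an Untangling Lemma (Lemma~\ref{lem:untangle}): given a linkage $\Ll$ and backlinkage $\Ll^\backlink$ of size $\Theta(qk)$, the absence of a half-integral $k$-cycle packing forces the alternating $\Ll/\Ll^\backlink$ walk to run for $q$ steps without self-intersection, yielding a genuine path that contains $q$ paths of $\Ll$ as disjoint subpaths. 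Applying this once to the well-linked set produces vertex-disjoint (rather than congestion-$2$) carrier paths $P_i$, and applying it again to each linkage $\Ll_{i,j}$ makes the segments $U_\iota$, $W_\iota$ in the dense case lie on disjoint paths. The resulting cycles then have congestion $1+1=2$. Each untangling costs a factor of $k$, which is exactly why the exponents grow from $3$ to $4$ to $5$ as $p$ drops from $4$ to $3$ to $2$.
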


\noindent Furthermore, if one asks not for a cycle packing, but a packing of subgraphs of large directed treewidth, 
  we prove the following packing result.
 \begin{theorem}\label{thm:qp}
 There exists an absolute constant $c$ with the following property.
 For every pair of positive integers $a$ and $b$, and every directed graph $G$
 of directed treewidth at least $c\cdot a^6 \cdot b^8 \cdot \log^2(ab)$, there are directed graphs $G_1,G_2,\ldots,G_a$ with the following properties:
 \begin{compactenum}
 \item each $G_i$ is a subgraph of $G$,
 \item each vertex of $G$ belongs to at most four graphs $G_i$, and
 \item each graph $G_i$ has directed treewidth at least $b$.
 \end{compactenum}
 \end{theorem}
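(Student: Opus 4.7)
The plan is to apply a quantitative version of the directed grid theorem of Kawarabayashi and Kreutzer to extract a sufficiently large cylindrical grid from $G$, and then partition that grid into $a$ slices, each of which still carries enough grid structure to witness directed treewidth at least $b$. Recall that a cylindrical grid of order $N$ consists of $N$ concentric directed cycles $C_1,\ldots,C_N$ together with $2N$ vertical paths, each crossing every cycle exactly once, and has directed treewidth $\Theta(N)$; so if I can split it into $a$ pieces each containing a cylindrical sub-grid of order $\Omega(N/a)$, then choosing $N = \Theta(ab)$ gives the right treewidth in each piece.

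Concretely, I would first apply the polynomial directed grid theorem to obtain, under the hypothesis $\dtw(G) = \Omega(a^6 b^8 \log^2(ab))$, a cylindrical grid of order $N$ realized inside $G$ as a (possibly half-integral) butterfly or topological minor: each abstract grid vertex is represented by some subset of $V(G)$, each abstract grid edge by an internally disjoint $G$-path between the corresponding representing sets, and each vertex of $G$ participates in at most $t$ of these representing objects, where $t\in\{1,2\}$ depending on the exact form of the grid theorem used. Next, partition the cycle indices $\{1,\ldots,N\}$ into $a$ consecutive bands $\mathcal{B}_1,\ldots,\mathcal{B}_a$ of size $N/a$ and the vertical-path indices $\{1,\ldots,2N\}$ into $a$ consecutive bundles $\mathcal{D}_1,\ldots,\mathcal{D}_a$ of size $2N/a$. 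Define $G_i$ as the subgraph of $G$ formed by the representing objects for all cycles in $\mathcal{B}_i$ together with those for all vertical paths in $\mathcal{D}_i$. Each $G_i$ then contains the realization of a cylindrical sub-grid of order $\Omega(N/a)=\Omega(b)$, hence has directed treewidth $\Omega(b)$. A vertex of $G$ belongs to at most $t$ bands (via the cycle realizations it participates in) and at most $t$ bundles (via the path realizations), hence lies in at most $2t \leq 4$ of the subgraphs $G_i$ in total.

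The main obstacle is the quantitative step: I need a polynomial directed grid theorem whose bound on $\dtw(G)$ in terms of $N$ fits into $\Oh(a^6 b^8 \log^2(ab))$ when $N=\Theta(ab)$, so the particular exponents in the statement reflect the best available polynomial directed grid theorem together with the losses incurred by the split into bands and bundles. Secondary technical care is needed to ensure that the arc between consecutive cycles inside a band, and between consecutive path-vertices inside a bundle, remain inside $G_i$; this is why the bands and bundles must be chosen as consecutive intervals and the representing objects of each retained cycle and path included in full. Once these bookkeeping details are handled, the congestion bound of four follows immediately from combining the congestion $t$ incurred by the directed grid theorem with the factor of two introduced by the band-versus-bundle partition.
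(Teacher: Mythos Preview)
Your approach has a fatal quantitative gap: the polynomial directed grid theorem you invoke does not exist for general directed graphs. The Kawarabayashi--Kreutzer directed grid theorem~\cite{DBLP:conf/stoc/KawarabayashiK15} gives only a \emph{non-elementary} bound between directed treewidth and the order of the largest cylindrical grid minor; the polynomial bound of Hatzel, Kawarabayashi, and Kreutzer~\cite{DBLP:conf/soda/HatzelKK19} applies only to \emph{planar} digraphs. This is stated explicitly in the introduction of the paper. So under the hypothesis $\dtw(G)=\Omega(a^6b^8\log^2(ab))$ you cannot obtain a cylindrical grid of order $\Theta(ab)$, and the entire argument collapses. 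Indeed, if such a polynomial grid theorem were available, the present paper would be largely redundant: one could read off both Theorem~\ref{thm:qp} and Theorem~\ref{thm:dtw-ep-all} almost immediately, since a cylindrical grid of order $k$ already contains $k$ vertex-disjoint cycles.

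The paper's actual proof avoids the grid theorem entirely. It starts from Lemma~\ref{lem:path-system} (which costs only a quadratic blow-up) to obtain a system of disjoint paths with a large well-linked set of endpoints, and then branches on whether the intersection graphs of the induced linkages are all $d$-degenerate. In the sparse case a rainbow independent set argument (Lemma~\ref{lem:degenerate}) yields disjoint crossing paths and Lemma~\ref{lem:dtw-bound} certifies large directed treewidth in each piece. In the dense case the new partitioning tool (Lemma~\ref{lem:disjointPairs}) splits a high-minimum-degree bipartite intersection graph into $a$ segment-pairs of large average degree, and Lemma~\ref{lem:high-degree-linkages-dtw} turns each pair into a subgraph of large directed treewidth. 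The congestion-four bound arises not from a band/bundle split of a grid but from the four linkages $\Ll,\Ll^\backlink,\Kk,\Kk^\backlink$ used to build each $D_i$.
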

\noindent Note that by setting $b=2$ in Theorem~\ref{thm:qp}, one obtains the case $p=4$ of Theorem~\ref{thm:dtw-ep-all} with a slightly weaker bound of $\Oh(k^6 \log^2 k)$ and,
     consequently, case $p=4$ of Theorem~\ref{thm:ep-all} with a weaker bound of $\Oh(k^7 \log^2 k)$.

Theorem~\ref{thm:qp} should be compared to its undirected analog of Chekuri and Chuzhoy~\cite{ChekuriC13} that asserts
that in an undirected graph $G$ of treewidth at least $c \min (ab^2, a^3b)$ one can find $a$ vertex-disjoint subgraphs
of treewidth at least $b$. While we still obtain a polynomial bound, we can only prove the existence of a quarter-integral
(as opposed to integral, i.e., vertex-disjoint) packing of subgraphs of high directed treewidth. 

In the \textsc{Disjoint Paths} problem,
given a graph $G$ and a set of terminal pairs $(s_i,t_i)_{i=1}^k$, we
ask to find an as large as possible collection of vertex-disjoint paths such that every path in the collection
connects some $s_i$ with $t_i$.
Let $\mathrm{OPT}$ be the number of paths in the optimum solution; we say that a family $\Pp$ is a \emph{congestion-$c$ polylogarithmic approximation}
if every path in $\Pp$ connects a distinct pair $(s_i,t_i)$, each vertex of $V(G)$ is contained in at most $c$ paths of $\Pp$, and $|\Pp| \geq \mathrm{OPT} / \mathrm{polylog}(\mathrm{OPT})$.
The successful line of research of approximation algorithms for the \textsc{Disjoint Paths} problem in undirected graphs
leading in particular to a congestion-2 polylogarithmic approximation algorithm of Chuzhoy and Li~\cite{ChuzhoyL16} for the edge-disjoint version, would not be possible without a
 fine understanding of well-behaved well-connected structures in a graph of high treewidth.
Of central importance to such \emph{routing} algorithms is the notion of a \emph{crossbar}: a crossbar of order $k$ and congestion
$c$ is a subgraph $C$ of $G$ with an \emph{interface} $I \subseteq V(C)$ of size $k$ such that for every matching $M$
on $I$, one can connect the endpoints of the matching edges with paths in $C$ such that every vertex is in at most $c$ paths.
Most of the known approximation algorithms for \textsc{Disjoint Paths} find a crossbar $(C,I)$ with a large set of disjoint paths between $I$ and the set of terminals $s_i$ and $t_i$. While one usually does not control how the paths connect the terminals $s_i$ and $t_i$ to interface vertices of $I$, the ability of the crossbar to connect \emph{any} given matching on the interface leads to a solution.

To obtain a polylogarithmic approximation algorithm, one needs the order of the crossbar to be comparable to the number of terminal pairs, which --- by well-known tools such as \emph{well-linked decompositions}~\cite{ChekuriKS05} --- is of the order of the treewidth of the graph. 
At the same time, we usually allow constant congestion (every vertex can appear in a constant number of paths of the solution, instead of just one). 
Thus, the milestone graph-theoretic result  used in approximation algorithms for \textsc{Disjoint Paths} is the existence of a congestion-2 crossbar
of order $k$ in a graph of treewidth $\Omega(k \mathrm{polylog}(k))$. 

While the existence of similar results for the general \textsc{Disjoint Paths} problem in directed graphs
is implausible~\cite{AndrewsCGKT10}, Chekuri, and Ene proposed to study the case of \emph{symmetric demands} where one
asks for a path from $s_i$ to $t_i$ and a path from $t_i$ to $s_i$ for a terminal pair $(s_i,t_i)$. 
First, they provided an analog of the well-linked decomposition for this case~\cite{ChekuriE14}, and then with Pilipczuk~\cite{ChekuriEP18}
showed the existence of an analog of a crossbar and a resulting approximation algorithm for \textsc{Disjoint Paths}
with symmetric demands in planar directed graphs.
Later, this result has been lifted to arbitrary proper minor-closed graph classes~\cite{CSS17}.
However, the general case remains widely open.

As discussed above, for applications in approximation algorithms for \textsc{Disjoint Paths}, it is absolutely essential
to squeeze as much as possible from the bound linking directed treewidth of a graph with the order of the crossbar,
while the final congestion is of secondary importance (but we would like it to be a small constant). 
We think of Theorem~\ref{thm:qp} as a step in this direction: sacrificing integral packings for quarter-integral ones,
we obtain much stronger bounds than the non-elementary bounds of~\cite{ReedRST96}.
Furthermore, such a step seems necessary, as it is hard to imagine a crossbar of order $k$ that would not contain
a constant-congestion (i.e., every vertex might be used in a constant number of cycles) packing of $\Omega(k)$ directed cycles.

On the technical side, the proof of Theorem~\ref{thm:qp} borrows a number of technical tools from the recent
work of Hatzel, Kawarabayashi, and Kreutzer that proved polynomial bounds for the directed grid minor theorem in planar
graphs~\cite{DBLP:conf/soda/HatzelKK19}.
We follow their general approach to obtain a directed treewidth sparsifier~\cite[Section 5]{DBLP:conf/soda/HatzelKK19} and modify
it in a number of places for our goal. The main novelty comes in different handling of the case
when two linkages intersect a lot. Here we introduce a new partitioning tool (see Section~\ref{sec:sep}) 
which we use in the crucial moment where we separate subgraphs $G_i$ from each other.

\paragraph{Organization and proof outline.}
After brief preliminaries in Section~\ref{sec:prelim}, we prove Theorem~\ref{thm:qp} in Sections~\ref{sec:sep}--\ref{sec:main}. A brief outline of the proof is as follows.
Assuming that the directed treewidth of the graph~$G$ in the statement Theorem~\ref{thm:qp} is sufficiently large, we use a known result (Lemma~\ref{lem:path-system}) to obtain a sufficiently large set~$\Pp$ of paths whose endpoints are well-linked.
We then distinguish two cases.
In the first case, the intersection graph of the paths in~$\Pp$ is sparse---the \emph{sparse case}.
Then, by the properties of $\Pp$ guaranteed by Lemma~\ref{lem:path-system} we can rather directly construct the required graphs~$G_i$: Intuitively, then there is a subset of $\Pp$ whose paths are sufficiently independent from each other to allow for a small overlap of the constructed graphs.
In the second case, the intersection graph of the paths in $\Pp$ contains a dense subgraph---the \emph{dense case}.
To treat this case, we need a new partitioning tool which allows us to separate the dense intersection subgraph into sufficiently many subgraphs that all remain sufficiently dense.
We can then look at each of these dense subgraphs individually and, using the density, construct the required subgraph~$G_i$ of sufficiently large directed treewidth.

The organization is as follows. Section~\ref{sec:sep} introduces the new partitioning tool, Section~\ref{sec:dense} handles the dense case in the analysis, while Section~\ref{sec:main} handles the sparse case and wraps up the argument. 

In Section~\ref{sec:imp}, we discuss how to modify the arguments of Section~\ref{sec:main} to obtain the improved bounds of Theorem~\ref{thm:dtw-ep-all}.

\section{Preliminaries}\label{sec:prelim}

For brevity, we use $[i] := \{1, 2, \ldots, i\}$, where $i \in \mathbb{N} \setminus \{0\}$.

\subsection{Linkages}

\newcommand{\linkfrom}[1]{\ensuremath{A(#1)}}
\newcommand{\linkto}[1]{\ensuremath{B(#1)}}
\newcommand{\pathfrom}[1]{\ensuremath{\textsf{start}(#1)}}
\newcommand{\pathto}[1]{\ensuremath{\textsf{end}(#1)}}

\newcommand{\backlink}{\textsf{back}}
  
\newcommand{\auxg}{\ensuremath{\textsf{Aux}}}

Let $G=(V(G),E(G))$ be a directed graph and let $A, B$ be subsets of $V(G)$ with $|A| = |B|$. A \emph{linkage} from $A$ to   $B$ in~$G$ is a set~$\Ll$ of~$|A|$ pairwise vertex-disjoint paths in~$G$, each with a starting vertex in $A$ and ending vertex in $B$. The \emph{order} of $\Ll$ is $|\Ll|=|A|$.    
For $X, Y \subseteq V(G)$ and a linkage $\Ll$ from $X$ to $Y$, we denote $\linkfrom{\Ll} := X$ and 
$\linkto{\Ll} := Y$.  
For a path or a walk $P$, by $\pathfrom{P}$ and $\pathto{P}$ we denote the starting and ending vertex of $P$, respectively.

Let $\Ll$ and $\Kk$ be linkages. The \emph{intersection graph} of $\Ll$ and $\Kk$, denoted by $I(\Ll, \Kk)$, is the bipartite graph with the vertex set $\Ll \cup \Kk$ and an edge between a vertex in~$\Ll$ and a vertex in~$\Kk$ if the corresponding paths share at least one vertex.

A vertex set~$W \subseteq V(G)$ is \emph{well-linked} if for all subsets   $A, B \subseteq W$ with $|A| = |B|$ there is a linkage $\Ll$ of
  order $|A|$ from $A$ to $B$ in   $G \setminus (W \setminus (A \cup B))$.

Let $\Pp$ be a family of walks in $G$ and let $c$ be a positive integer. We say that $\Pp$ is \emph{of congestion $c$} if for every $v \in V(G)$, the total
number of times the walks in $\Pp$ visit $v$ is at most $c$; here, if a walk $W \in \Pp$ visits $v$ multiple times, we count each visit separately. 
A family of paths $\Pp$ is \emph{half-integral} (\emph{quarter-integral}) if it 
is of congestion $2$ (resp. $4$). 

We call two linkages $\Ll$ and $\Ll^\backlink$
\emph{dual} to each other if $\linkfrom{\Ll} = \linkto{\Ll^\backlink}$ and
$\linkfrom{\Ll^\backlink} = \linkto{\Ll}$.
For two dual linkages $\Ll$ and $\Ll^\backlink$ in a graph $G$, we define 
an auxiliary directed graph $\auxg(\Ll,\Ll^\backlink)$ as follows.
We take $V(\auxg(\Ll,\Ll^\backlink)) = \Ll$ and for every path $P \in \Ll^\backlink$
that starts in a vertex $\pathfrom{P} = \pathto{L}$ for some $L \in \Ll$
and ends in a vertex $\pathto{P} = \pathfrom{L'}$ for some $L' \in \Ll$,
we put an arc $(L,L')$ to $\auxg(\Ll,\Ll^\backlink)$. Note that
it may happen that $L = L'$.
When the backlinkage $\Ll^\backlink$ is clear from the context, we abbreviate
$\auxg(\Ll,\Ll^\backlink)$ to $\auxg(\Ll)$. 
Observe that in $\auxg(\Ll,\Ll^\backlink)$ every node is of in- and out-degree exactly one
and thus this graph is a disjoint union of directed cycles. 

With every arc $(L,L')$ of $\auxg(\Ll, \Ll^\backlink)$ we can associate the walk from
$\pathfrom{L}$ to $\pathfrom{L'}$ that first goes along $L$ and then follows the path
$P \in \Ll^\backlink$ that gives rise to the arc $(L,L')$. 
Consequently, with every collection of pairwise disjoint paths and cycles in $\auxg(\Ll,\Ll^\backlink)$
there is an associated collection of walks (closed walks for cycles) in $G$ that is
of congestion $2$ as it originated from two linkages. Note that the same construction works if $\Ll$ and $\Ll^\backlink$ are half-integral linkages, and then the walks
in $G$ corresponding to a family of paths and cycles in $\auxg(\Ll,\Ll^\backlink)$ would
be of congestion $4$.

Furthermore, with a pair of dual linkages $\Ll$ and $\Ll^\backlink$ we can associate
a \emph{backlinkage-induced order} $\Ll = \{L_1,L_2,\ldots,L_{|\Ll|}\}$ as follows. 
If $C_1,C_2,\ldots,C_r$ are the cycles of $\auxg(\Ll,\Ll^\backlink)$ in an arbitrary order, then
$L_1,L_2,\ldots,L_{|C_1|}$ are the vertices of $C_1$ in the order of their appearance on $C_1$, and
$L_{|C_1|+1},\ldots,L_{|C_1|+|C_2|}$ are the vertices of $C_2$ in the order of their appearance on $C_2$, etc.
That is, we order the elements of $\Ll$ first according to the cycle of $\auxg(\Ll)$ they lie on, and then, within
one cycle, according to the order around this cycle.

We will also need the following operation on a pair of dual linkages $\Ll$ and $\Ll^\backlink$.
Let $\Pp \subseteq \Ll$ be a sublinkage. For every $P \in \Pp$, construct a walk $Q(P)$ as follows.
Start from the path $Q_0 \in \Ll^\backlink$ with $\pathfrom{Q_0} = \pathto{P}$ and set $Q(P) = Q_0$. Given $Q_i \in \Ll^\backlink$ for $i \geq 0$, proceed as follows.
Let $P_{i+1} \in \Ll$ be the path with $\pathto{Q_i} = \pathfrom{P_{i+1}}$. If $P_{i+1} \in \Pp$, then stop. Otherwise, define $Q_{i+1} \in \Ll^\backlink$
to be the path with $\pathto{P_{i+1}} = \pathfrom{Q_{i+1}}$. Append $P_{i+1}$ and $Q_{i+1}$ at the end of $Q(P)$ and repeat. 
Finally, we shortcut $Q(P)$ to a path $Q'(P)$ with the same endpoints.
In this manner, $\Qq := \{Q'(P)~|~P \in \Pp\}$ is a half-integral linkage with $\linkfrom{\Pp} = \linkto{\Qq}$ and $\linkfrom{\Qq} = \linkto{\Pp}$. 
We call $\Qq$ the \emph{backlinkage induced by $\Pp$ on $(\Ll, \Ll^\backlink)$}. 
Furthermore, we can perform the same construction if $\Ll$ and $\Ll^\backlink$ are half-integral linkages, obtaining a quarter-integral linkage $\Qq$.

\subsection{Degeneracy and directed treewidth}

A graph $G$ is \emph{$d$-degenerate} if every subgraph of $G$ contains a vertex of degree at most $d$. 
In this paper we do not need the exact definition of directed treewidth. Instead, we rely on the following two results.
\begin{lemma}[\cite{Reed99}]\label{lem:dtw2wl}
Every directed graph $G$ of directed treewidth $k$ contains a
well-linked set of size $\Omega(k)$.
\end{lemma}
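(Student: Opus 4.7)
The plan is to exploit the standard min-max duality in the theory of directed treewidth, which is proved in Reed's paper: a directed graph of directed treewidth $k$ admits a \emph{haven} of order $\Omega(k)$, i.e., a function $\beta$ that assigns to every vertex set $X$ of size below the threshold a non-empty strongly connected subgraph $\beta(X)$ of $G - X$, with the consistency condition $\beta(X) \supseteq \beta(X')$ whenever $X \subseteq X'$. Since the excerpt explicitly declines to unpack the definition of directed treewidth, I would take this duality as a black box; the work then reduces to extracting a well-linked set from a haven.

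From such a haven I would build the well-linked set $W$ greedily. Start with $W = \emptyset$ and repeatedly add a vertex inside $\beta(W)$ to $W$, so that after $\Omega(k)$ iterations the set has size linear in $k$. To verify well-linkedness, fix any $A, B \seq W$ with $|A|=|B|$, and suppose toward contradiction that there is no linkage of order $|A|$ from $A$ to $B$ inside $G \setminus (W \setminus (A \cup B))$. By Menger's theorem for directed graphs, there is a separator $S$ of size strictly less than $|A|$ in this subgraph. Setting $X := S \cup (W \setminus (A \cup B))$ gives a set of size less than $|W|$, so the haven provides a strongly connected $\beta(X)$ that, by the way $W$ was selected, must be simultaneously reachable from $A$ and able to reach $B$ while avoiding $S$; this contradicts the separator property.

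The main obstacle I anticipate is the last step: for the Menger argument to fire, every previously chosen element of $W$ must retain two-way reachability to the haven residues even after further vertices of $W$ are added. A naive greedy rule, which only requires the new vertex to lie in $\beta(W)$, does not quite suffice. The correct strategy is to maintain as an invariant that the chosen set lies on an internally disjoint family of short directed paths through $\beta(W)$ which witness reachability in both directions from every element of $W$ to the current haven residue; each new vertex is selected so that this invariant is preserved, possibly by rerouting through $\beta(W \cup \{\text{candidate}\})$. This more careful greedy choice, together with the haven monotonicity, is what converts the min-max duality into the well-linkedness conclusion claimed in the lemma.
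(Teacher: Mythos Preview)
The paper does not prove this lemma at all: it is quoted verbatim from Reed~\cite{Reed99} and used as a black box, precisely because the authors state they ``do not need the exact definition of directed treewidth'' and instead rely on this result. So there is no proof in the paper to compare against.

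As for the soundness of your sketch: the high-level route through havens is the right one and matches Reed's framework. But the gap you yourself flag is real and your proposed repair does not close it. After the greedy construction, for a separator $S$ of size $<|A|$ in $G\setminus(W\setminus(A\cup B))$ and $X=S\cup(W\setminus(A\cup B))$, you need that some vertex of $A\setminus S$ can reach $\beta(X)$ and that $\beta(X)$ can reach some vertex of $B\setminus S$ inside $G-X$; merely having chosen each $w_i\in\beta(\{w_1,\dots,w_{i-1}\})$ does not give this, because $\beta$ is only monotone under \emph{inclusion} of the deleted set, and $X$ is neither a subset nor a superset of the prefixes $\{w_1,\dots,w_{i-1}\}$. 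Your suggested invariant---an ``internally disjoint family of short directed paths'' witnessing two-way reachability---is not made precise: you do not say what the paths are disjoint from, why a new vertex preserving the invariant exists, or why the invariant survives deletion of $X$ rather than of $W$. Without that, the Menger step does not produce a contradiction. A cleaner way to finish is to argue via the bramble formulation (the sets $\beta(Z)$ pairwise touch, and a minimum hitting set for them has size at least the haven order), from which well-linkedness of a carefully chosen hitting set follows; alternatively, one can quote Reed's argument directly rather than reinventing it.
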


\begin{lemma}[\cite{DBLP:journals/corr/KawarabayashiK14,DBLP:conf/stoc/KawarabayashiK15}]\label{lem:path-system}
There is an absolute constant $c'$ with the following property.
 Let $\alpha,\beta \geq 1$ be integers and let $G$ be a digraph of $\dtw(G) \geq c' \cdot \alpha^2\beta^2$.
  Then there exists a set of $\alpha$ vertex-disjoint paths $P_1,\ldots,P_\alpha$ and sets $A_i,B_i\subseteq V(P_i)$, where $A_i$ appears before $B_i$ on $P_i$, both $|A_i|, |B_i|= \beta$, and
  $\bigcup_{i=1}^{\alpha} A_i\cup B_i$
is well-linked.
\end{lemma}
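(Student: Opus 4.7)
The plan is to first invoke Lemma~\ref{lem:dtw2wl} to extract a well-linked set $W \subseteq V(G)$ of size $\Omega(\alpha^2\beta^2)$. Because every subset of a well-linked set remains well-linked (if $W' \subseteq W$ and $A,B \subseteq W'$, then the linkage in $G \setminus (W \setminus (A \cup B))$ guaranteed by the well-linkedness of $W$ is also a valid linkage in the larger graph $G \setminus (W' \setminus (A \cup B))$), the problem reduces to finding $\alpha$ pairwise vertex-disjoint paths $P_1, \ldots, P_\alpha$ such that each $P_i$ visits $2\beta$ prescribed vertices of $W$ in a fixed order. Setting $A_i$ to be the first $\beta$ such visits on $P_i$ and $B_i$ the last $\beta$ then immediately gives the conclusion.

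Next, I would partition $W$ into $2\beta$ ``layers'' $W_1,\ldots,W_{2\beta}$, each of size $\Theta(\alpha^2 \beta)$, and, using well-linkedness of $W$, for every $j \in [2\beta - 1]$ find a linkage $\Ll_j$ from $W_j$ to $W_{j+1}$ of order $|W_j| = |W_{j+1}|$, living in $G \setminus (W \setminus (W_j \cup W_{j+1}))$. The intended long paths should be formed by chaining one path of $\Ll_1$ with one of $\Ll_2$, etc., so that each long path successively hits a vertex of $W_1$, then of $W_2$, and so on up to $W_{2\beta}$.

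The main obstacle is that the naive concatenation of $\Ll_1, \Ll_2, \ldots, \Ll_{2\beta-1}$ produces walks of congestion potentially up to $2\beta$ rather than vertex-disjoint paths: two linkages $\Ll_j$ and $\Ll_{j'}$ may route through common internal vertices, and there is no a priori matching between the endpoint in $W_{j+1}$ of a path of $\Ll_j$ and the starting endpoint in $W_{j+1}$ of a path of $\Ll_{j+1}$. I would handle this inductively: suppose we have already assembled $m_j$ vertex-disjoint partial paths reaching $W_j$; using the fact that $W_j$ still contains $\Omega(\alpha^2\beta)$ vertices and that the next linkage $\Ll_j$ has order comparable to $|W_j|$, an averaging / sunflower argument extracts a vertex-disjoint sublinkage of $\Ll_j$ that extends all but a small fraction of the partial paths into $W_{j+1}$. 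After $2\beta - 1$ such extension steps, the survival rate is roughly $(1 - O(1/\beta))^{2\beta} = \Omega(1)$ of the initial $\Omega(\alpha^2\beta)$ starting paths, comfortably yielding at least $\alpha$ full-length paths. It is precisely this telescoping loss over $2\beta$ layers (squaring $\beta$) and the need for $\Omega(\alpha)$ starting paths per layer (squaring $\alpha$) that forces the $\alpha^2\beta^2$ assumption on $\dtw(G)$.

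Finally, I would pick any $\alpha$ of the surviving long paths as $P_1, \ldots, P_\alpha$ and, for each $i$, declare $A_i$ to consist of the $W$-vertices visited in layers $W_1,\ldots,W_{\beta}$ and $B_i$ those in layers $W_{\beta+1},\ldots,W_{2\beta}$. By construction $A_i$ precedes $B_i$ along $P_i$, each $|A_i| = |B_i| = \beta$, and $\bigcup_i (A_i \cup B_i) \subseteq W$ is well-linked by the subset observation above. The anticipated difficulty, as noted, is quantitatively controlling the extension step so that both the number of surviving paths and the vertex-disjointness across layers are preserved simultaneously; everything else is bookkeeping.
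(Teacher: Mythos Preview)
The paper does not prove Lemma~\ref{lem:path-system}; it is quoted from \cite{DBLP:journals/corr/KawarabayashiK14,DBLP:conf/stoc/KawarabayashiK15}. The only hint the paper gives about its derivation is in the Conclusions: Lemma~\ref{lem:path-system} is a direct corollary of the result that a digraph of directed treewidth $\Omega(k^2)$ contains a \emph{single} path $P$ together with a well-linked set $A \subseteq V(P)$ of size $k$. One applies this with $k = 2\alpha\beta$, then chops the resulting path into $\alpha$ consecutive subpaths each meeting $2\beta$ points of $A$. All the work is hidden in the single-path black box.

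Your proposal takes a different route and contains a genuine gap at the extension step. You have $m_j$ vertex-disjoint partial paths with internal vertices lying anywhere in $G$, and you want to extend them via a linkage $\Ll_j$ from $W_j$ to $W_{j+1}$. Well-linkedness of $W$ tells you nothing about the graph $G$ minus the internal vertices of the partial paths already built: a single path of $\Ll_j$ can intersect arbitrarily many of the partial paths, and conversely the partial paths may have destroyed all connectivity between $W_j$ and $W_{j+1}$ in the remaining graph. There is no averaging or sunflower argument that salvages a constant fraction here; to make such an extension work you would need a Menger-type statement in a graph where you have no control over what has been deleted. This is precisely why the cited proof goes via one long path (built once, using a nontrivial argument about brambles/havens) rather than iteratively splicing $2\beta$ linkages. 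The parameter accounting you sketch ($(1 - O(1/\beta))^{2\beta}$ survival) presupposes a per-step loss bound that is never established and, in general, is false.
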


We also need the following two auxiliary results. 
Note that a coloring in Lemma~\ref{lem:degenerate} can be arbitrary and is not necessarily proper.

\begin{lemma}[{\cite[Lemma~4.3]{DBLP:journals/ejc/ReedW12}}]\label{lem:degenerate}
Let $r\ge 2$, $d$ be a real, and $H$ be an $r$-colored graph with color classes $V_1,\ldots,V_r$, such that for every $i$ it holds that $|V_i|\ge 4e(r-1)d$  and for every $i \neq j$ the graph $H[V_i\cup V_j]$ is $d$-degenerate. Then there exists an independent set $\{x_1,\ldots,x_r\}$ such that $x_i\in V_i$ for every $i \in [r]$.
\end{lemma}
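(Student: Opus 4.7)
The plan is to prove the lemma via the asymmetric Lov\'{a}sz Local Lemma (LLL) applied to a uniformly random transversal. For each $i\in[r]$ choose $x_i\in V_i$ independently and uniformly at random, and for each edge $e=uv$ of $H$ with $u\in V_i$, $v\in V_j$, $i\ne j$, define the bad event $A_e$ that $u=x_i$ and $v=x_j$. Edges inside a single color class are irrelevant since only one vertex per class is sampled. One has $\Pr[A_e]=1/(|V_i||V_j|)$, and $A_e$ is determined by the random choices in $V_i$ and $V_j$, so it is mutually independent of every other $A_{e'}$ whose edge meets neither $V_i$ nor $V_j$. Avoiding every $A_e$ delivers the desired independent transversal.

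Next I would exploit degeneracy. For every pair $i\ne j$, the $d$-degeneracy of $H[V_i\cup V_j]$ yields an orientation in which every vertex has out-degree at most $d$; in particular $|E(H[V_i\cup V_j])|\le d(|V_i|+|V_j|)$. I then apply the asymmetric LLL with the ansatz $y_e := \alpha/(|V_i||V_j|)$ for every edge $e$ between $V_i$ and $V_j$, where $\alpha$ is a constant close to $e$. The LLL condition $\Pr[A_e]\le y_e\prod_{e'\sim e}(1-y_{e'})$ then reduces to
\[
\frac{1}{\alpha}\le \prod_{e'\sim e}(1-y_{e'}),
\]
which, since each $y_{e'}$ is tiny, follows once $\sum_{e'\sim e} y_{e'}$ is appropriately bounded.

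The key estimate is the bound on this sum. For $e$ between $V_i$ and $V_j$ the dependency neighborhood lies among edges touching $V_i\cup V_j$, so $\sum_{e'\sim e} y_{e'} \le S_i+S_j$, where
\[
S_i := \sum_{k\ne i}\frac{\alpha\,|E(H[V_i\cup V_k])|}{|V_i||V_k|} \le \alpha d\sum_{k\ne i}\!\left(\frac{1}{|V_k|}+\frac{1}{|V_i|}\right).
\]
The hypothesis $|V_\ell|\ge 4e(r-1)d$ for every $\ell$ then reduces this to $S_i\le \alpha/(2e)$, hence $\sum_{e'\sim e} y_{e'} \le \alpha/e$. With $\alpha$ tuned to $e$ the LLL inequality just barely holds, and the lower-order correction in $\ln(1-y)\ge -y-y^2$ supplies the required strict slack since each $y_{e'}$ is of order $1/(rd)^2$.

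The main obstacle I expect is the asymmetric bookkeeping. The two color classes incident to an edge can have very different sizes, which forces the weight $y_e$ to depend on the product $|V_i|\cdot|V_j|$; the $d$-degeneracy orientation is precisely what lets me telescope the edge count against the class sizes. The hypothesis must be used for \emph{every} $\ell$ simultaneously, since otherwise $\sum_{k\ne i}1/|V_k|$ is uncontrolled. The constant $4e$ is the signature of this LLL setup: the factor $2$ combines the $S_i$ and $S_j$ contributions, and the factor $e$ is what makes $1/\alpha \le e^{-\alpha/e}$ attain equality at $\alpha=e$.
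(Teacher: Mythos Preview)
The paper does not prove this lemma; it is quoted from Reed and Wood without proof, so there is no in-paper argument to compare against. Your Lov\'asz Local Lemma approach on a uniformly random transversal is the standard route and is almost certainly what the cited source does, given the constant~$4e(r-1)d$.

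There is, however, a genuine gap in your final step. With $\alpha=e$ you obtain $\sum_{e'\sim e} y_{e'}\le 1$, and you need $\prod_{e'\sim e}(1-y_{e'})\ge 1/e$. Your claim that ``$\ln(1-y)\ge -y-y^2$ supplies the required strict slack'' points the wrong way: that inequality yields $\prod(1-y_{e'})\ge \exp\bigl(-\sum y_{e'}-\sum y_{e'}^2\bigr)$, which is \emph{smaller} than $e^{-1}$, not larger, so the LLL condition is not verified. In fact, for any collection with $\sum y_{e'}=1$ one always has $\prod(1-y_{e'})<1/e$, so no amount of ``each $y_{e'}$ is tiny'' will rescue the argument as written. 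The clean fix is to first shrink every class to a subset of size exactly $n:=\lceil 4e(r-1)d\rceil$ (degeneracy is hereditary, so the pairwise hypotheses persist). Then every bad event has the same probability $p=1/n^2$, the dependency degree satisfies $D+1\le 4(r-1)dn$, and the symmetric LLL with $y\equiv 1/(D+1)$ applies via the exact inequality $(1-\tfrac{1}{D+1})^{D}\ge 1/e$, giving $ep(D+1)\le 4e(r-1)d/n\le 1$. This recovers the constant $4e(r-1)d$ on the nose; the rest of your write-up (independence structure, degeneracy edge bound) is fine.
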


\begin{lemma}[{\cite[Lemma~5.5]{DBLP:conf/soda/HatzelKK19}}]\label{lem:dtw-bound}
  Let $G$ be a digraph and $P_1,\ldots,P_k$ be disjoint paths such that each $P_i$ consists of two subpaths $A_i$ and $B_i$, where $A_i$ precedes $B_i$.
Furthermore, let $\{ L_{i,j} \colon i,j \in [k], i\neq j\}$ be a set of pairwise disjoint paths, such that $L_{i,j}$ starts in $B_i$ and ends in $A_j$.
  Then
  \[
    \dtw\Bigl(\bigcup_i  P_i \cup \bigcup_{i\neq j} L_{i,j}\Bigr)\ge \frac{k}{8}.
  \]
\end{lemma}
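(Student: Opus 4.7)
The plan is to exhibit a well-linked set of size exactly $k$ inside $H := \bigcup_i P_i \cup \bigcup_{i \neq j} L_{i,j}$, and then convert this to a lower bound on $\dtw(H)$ via the standard (near-)inverse of Lemma~\ref{lem:dtw2wl}, tracking constants to obtain the factor $\tfrac{1}{8}$.

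For each $i \in [k]$ I would designate a \emph{junction vertex} $c_i$ on $P_i$ at the boundary of $A_i$ and $B_i$, and set $W := \{c_1, \ldots, c_k\}$. The claim is that $W$ is well-linked in $H$. Given arbitrary $A, B \subseteq W$ with $|A| = |B|$, I handle $A \cap B$ by trivial length-zero paths, then pick any bijection $\sigma \colon A \setminus B \to B \setminus A$ and route each pair $(c_i, c_{\sigma(i)})$ as follows: traverse $B_i$ forward from $c_i$ to $\pathfrom{L_{i, \sigma(i)}}$, then follow $L_{i, \sigma(i)}$ to its endpoint in $A_{\sigma(i)}$, then traverse $A_{\sigma(i)}$ forward to $c_{\sigma(i)}$. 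Because the starting indices in $A \setminus B$ are pairwise distinct (and likewise the ending indices), the used portions of the $B_i$'s and $A_j$'s are pairwise disjoint; the $L_{i, \sigma(i)}$'s are disjoint by hypothesis; and no junction vertex $c_m$ outside $A \cup B$ is ever crossed, so the routed paths avoid $W \setminus (A \cup B)$.

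With $W$ well-linked and $|W| = k$, I would invoke that a well-linked set of size $w$ in a digraph yields $\dtw \geq w/8$; applied to $|W| = k$ this gives the stated bound. The main obstacles I anticipate are twofold. First, one must ensure that no $L_{i,j}$ internally traverses any $c_m$ with $m \notin \{i, j\}$; normally this is built into the standard construction producing such structures (the $L_{i,j}$'s are internally disjoint from all $P_m$'s), but if only pairwise disjointness among the $L$'s is assumed one must either slightly shift the $c_m$'s off the interfering $L$-paths or restrict to a sufficiently large index subset. Second, pinning down the conversion constant to exactly $\tfrac{1}{8}$ is the delicate bookkeeping step: it requires invoking an explicit haven (or bramble) construction from a well-linked set and chasing the losses, but it is routine once the well-linkedness of $W$ is in hand.
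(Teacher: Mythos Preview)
The paper does not prove this lemma; it is quoted verbatim from \cite{DBLP:conf/soda/HatzelKK19} and used as a black box. So there is no ``paper's proof'' to compare against, and I can only comment on the soundness of your sketch on its own.

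Your plan has a genuine gap in the disjointness step, and it is broader than the obstacle you flag. You argue that the routed paths are pairwise vertex-disjoint because the $B_i$-segments are pairwise disjoint, the $A_{\sigma(i)}$-segments are pairwise disjoint, and the $L_{i,\sigma(i)}$ are pairwise disjoint. But you never rule out the cross terms: nothing in the hypotheses prevents $L_{i,\sigma(i)}$ from running through an interior vertex of the $B_{i'}$-segment or the $A_{\sigma(i')}$-segment used by another routed path. The lemma only assumes the $L_{i,j}$ are pairwise disjoint and have their endpoints in the right places; their interiors may wander through arbitrary $P_m$. So the family you construct is in general \emph{not} a linkage, and $W=\{c_1,\ldots,c_k\}$ need not be well-linked. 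The obstacle you do list (an $L_{i,j}$ hitting some $c_m$) is only about avoiding $W\setminus(A\cup B)$, which is the easier of the two requirements; the harder one, pairwise disjointness, is what actually fails. Your proposed fixes do not address this: shifting the $c_m$'s does nothing about intersections away from the junctions, and ``restricting to a large index subset'' does not obviously kill $L$--$P$ intersections either.

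The sketch is salvageable, but you have to change the target. Observe that every vertex of $H$ lies on at most one $P_i$ and at most one $L_{i,j}$, so your routed family has congestion at most~$2$. Hence for all $A,B\subseteq W$ with $|A|=|B|$ there is a half-integral $A$--$B$ linkage of order $|A|$ in $H\setminus(W\setminus(A\cup B))$, which by Menger means every $A$--$B$ separator there has size at least $|A|/2$. One then either passes through a fractional/haven argument directly, or extracts a genuinely well-linked subset of $W$ of size $\Omega(k)$; either way you get $\dtw(H)=\Omega(k)$, though the constant you recover this way need not be exactly $1/8$. If you want the stated constant you should consult the original proof in \cite{DBLP:conf/soda/HatzelKK19} rather than trying to push the well-linkedness route with ad hoc bookkeeping.
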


\section{Partitioning lemma}\label{sec:sep}
In this section, we develop a main technical tool that we use in the
proof of Theorem~\ref{thm:qp}. Intuitively, in the dense case of the proof (see the proof of Lemma~\ref{lem:dense} in Section~\ref{sec:dense}),
we will have a bipartite graph of large minimum degree which we
partition into subgraphs induced by pairs of vertex sets~$(U_i, W_i)$.
These subgraphs will define the~$G_i$ from the statement of Theorem~\ref{thm:qp}. To
obtain a lower bound on the directed treewidth of~$G_i$, we need that
the parts~$(U_i, W_i)$ each induce a subgraph of large average degree.

The bipartite graph $G=(X \cup Y,E)$, which will be considered in this section, has a fixed ordering of vertices in each bipartition class: $X=\{x_1,x_2,\ldots,x_a\}$ and $Y=\{y_1,y_2,\ldots,y_b\}$.
A subset $X'$ of $X$ (resp. $Y'$ of $Y$) is called a \emph{segment} if it is of the form $\{x_i,x_{i+1},\ldots,x_j\}$ for some $1 \leq i < j \leq a$ (resp. $\{y_i,y_{i+1},\ldots,y_j\}$ for some $1 \leq i < j \leq b$).
Now we are ready to prove the following lemma.

\begin{lemma} \label{lem:partition}
Let $h \geq 0$ and $n$ be integers, $d$ be a positive real such that $d \cdot 4^{h+1} - 1 > 2$, and let $G$ be a bipartite graph with bipartition classes $X = \{x_1,x_2,\ldots,x_a\}$ and $Y=\{y_1,y_2,\ldots,y_b\}$, such that $a+b \leq n$ and $|E(G)| \geq (d \cdot 4^{h+1} -1) \cdot n$.
Then in $X$ we can find $k:=2^h$ pairwise disjoint sets $I_1,I_2,\ldots,I_k$, and in $Y$ we can find $k$ pairwise disjoint sets $J_1,J_2,\ldots,J_k$, such that:
\begin{compactenum}
\item for every $i \in [k]$ the set $I_i$ is a segment of $X$ and the set $J_i$ is a segment of $Y$,
\item for every $i \in [k]$, the number of edges in $G$ between $\{x_j: j \in I_i\}$ and $\{y_j: j \in J_i\}$ is at least $d \cdot n$.
\end{compactenum}
\end{lemma}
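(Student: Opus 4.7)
The plan is to prove the lemma by induction on $h$. For the base case $h = 0$, I would take $I_1 := \{1, \ldots, a\}$ and $J_1 := \{1, \ldots, b\}$: both are segments by definition, and since the precondition $4d - 1 > 2$ gives $d > 3/4 \geq 1/3$, we have $|E(G)| \geq (4d - 1)n \geq dn$, so the single required pair is witnessed by the whole graph.

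For the inductive step, the idea is to split both $X$ and $Y$ approximately in half by cumulative degree, producing a $2 \times 2$ partition of the edge set, and then recurse on the two quadrants lying along the heavier diagonal. Concretely, I would choose $p^* \in \{0, 1, \ldots, a\}$ minimizing $|\alpha|$, where $\alpha := E_X(p^*) - |E|/2$ and $E_X(p) := \sum_{i=1}^{p} \deg_G(x_i)$; since the consecutive values of $E_X$ differ by at most $\max_i \deg(x_i) \le b$, this gives $|\alpha| \le b/2$. Analogously, choose $q^*$ yielding $|\beta| := |E_Y(q^*) - |E|/2| \le a/2$. Writing $X_1 := \{1, \ldots, p^*\}$, $X_2 := \{p^*+1, \ldots, a\}$ and $Y_1, Y_2$ analogously, and letting $A, B, C, D$ denote the edge counts in the four quadrants $(X_1, Y_1), (X_1, Y_2), (X_2, Y_1), (X_2, Y_2)$, a direct computation yields $A - D = \alpha + \beta$ and $B - C = \alpha - \beta$, so $|A - D|, |B - C| \le (a+b)/2 \le n/2$.

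Since $(A + D) + (B + C) = |E|$, at least one of the two diagonals has total $\ge |E|/2$, and I would pick that one (main or anti). Each of the two chosen quadrants then contains at least $|E|/4 - n/4 \ge (d \cdot 4^{h+1} - 1)n/4 - n/4 = (d \cdot 4^h - 1/2)n \ge (d \cdot 4^h - 1)n$ edges, which suffices to invoke the inductive hypothesis at level $h - 1$ with the same $n$ on each quadrant (viewed as a bipartite subgraph with the inherited orderings on $X_i$ and $Y_j$). This yields $2^{h-1}$ valid pairs inside each quadrant, for $2^h = k$ pairs in total. Disjointness of the resulting $X$-segments (and $Y$-segments) is automatic because the two chosen quadrants sit on disjoint $X$-segments $X_1, X_2$ and on disjoint $Y$-segments.

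The main technical obstacle is that the precondition required by the IH at level $h - 1$, namely $d \cdot 4^h - 1 > 2$, is strictly stronger than the one available at level $h$ and may fail when $d$ is near $3/4^{h+1}$. I expect to address this either by strengthening the inductive statement so that only the edge-count inequality is required for recursion (the strict precondition being needed just to make the base case nontrivial), or by handling the problematic low-$d$ regime directly, using that in that range each subproblem carries $\Theta(n)$ edges on $O(n)$ vertices, which is ample slack to find $2^{h-1}$ sparse pairs with $\Theta(n/2^h)$ edges each by a direct argument.
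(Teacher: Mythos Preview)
Your argument is essentially the paper's: induct on $h$, cut $X$ and $Y$ at their cumulative-degree medians, and recurse on the two quadrants of the heavier diagonal. The only real difference is cosmetic---the paper discards the two median vertices $x_s, y_t$ (writing $X = X^1 \cup \{x_s\} \cup X^2$ and similarly for $Y$) and therefore loses $3n/4$ per quadrant instead of your $n/4$; both routes land at the recursive bound $|E(G^i)| \geq (d\cdot 4^h - 1)n$, yours with slack.

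On the obstacle you flag: you are right, and the paper's own proof has exactly the same gap. After deriving the edge bound for the two subgraphs, the paper simply writes ``We observe that graphs $G^1, G^2$ satisfy the inductive assumption (for $h-1$)'' without verifying the side condition $d\cdot 4^h - 1 > 2$, which indeed need not follow from the level-$h$ hypothesis when $d$ is close to $3/4^{h+1}$. So your proposal is already at least as rigorous as the paper here, and more candid about the difficulty. Neither of your sketched fixes is complete as written---fix~(a) still runs into the base case needing $3d \geq 1$, and fix~(b) is only a direction---but your instinct that the strict inequality is inessential to the mechanism is correct: one clean repair is to run the induction on the invariant $|E| \geq \bigl(4^h d + (4^h-1)/3\bigr)n$, which under your $|E|/4 - n/4$ loss telescopes exactly down to $|E| \geq dn$ at $h=0$ with no side condition on~$d$; matching this to the lemma's stated hypothesis then costs only an adjustment of constants in the sole downstream application.
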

\begin{proof}
For $I \subseteq X$ and $J \subseteq Y$, let $e(I,J)$ denote the number of edges with one endpoint in $I$ and the other in $J$. Observe that $e(X,Y)=|E(G)| > 2n$. 

We prove the lemma by induction on $h$. Note that for $h=0$ the claim is trivially satisfied by taking $I_1 = X$ and $J_1 = Y$, as $d \cdot 4^{h+1} - 1 > 2$ and $h \geq 0$ implies $d \cdot 4^{h+1}-1 \geq d$.
So now assume that $h \geq 1$ and the claim holds for $h-1$. Let $s \in [a]$ be the minimum integer, for which $\sum_{i=1}^s \deg x_i \geq e(X,Y)/2$, and let $t \in [b]$ be the minimum integer, for which $\sum_{i=1}^t \deg y_i \geq e(X,Y)/2$.
We observe that $d \cdot 4^{h+1} -1 > 2$ implies that $1 < s < a$ and $1 < t < b$.
Define $X^1 := \{x_1,x_2,\ldots,x_{s-1}\}$ and $X^2 := \{x_{s+1},\ldots,x_a\}$, and $Y^1 := \{y_1,y_2,\ldots,y_{t-1}\}$ and $Y^2 := \{y_{t+1},\ldots,y_b\}$.

We aim to show that the number of edges joining $X^1$ and $Y^1$ is roughly the same as the number of edges joining $X^2$ and $Y^2$, and the number of edges joining $X^1$ and $Y^2$ is roughly the same as the number of edges joining $X^2$ and $Y^1$. Since $\deg x_s \leq b < n$ and $\deg y_t \leq a < n$, by the choice of $s$ and $t$ we obtain the following set of inequalities.
\begin{align} \label{eq:boundXiY}
\begin{split}
e(X,Y)/2 - \deg x_s \leq e(X^1,Y) \leq  e(X,Y)/2\\
e(X,Y)/2 - \deg x_s \leq e(X^2,Y) \leq  e(X,Y)/2\\
e(X,Y)/2 - \deg y_t \leq e(X,Y^1) \leq  e(X,Y)/2\\
e(X,Y)/2 - \deg y_t \leq e(X,Y^2) \leq  e(X,Y)/2.
\end{split}
\end{align}
Observe that
\begin{align*}
e(X^1,Y^1) + e(X^1,Y^2) &\leq e(X^1,Y) = e(X^1,Y^1) + e(X^1,Y^2) + e(X^1,\{y_t\}) \\
                                         &\leq e(X^1,Y^1) + e(X^1,Y^2) + \deg y_t
\end{align*}
(and analogously for each of the remaining inequalities in \eqref{eq:boundXiY}).
Thus we obtain:
\begin{align} \label{eq:bound-sumXY}
\begin{split}
e(X,Y)/2 - n \leq e(X^1,Y^1)+e(X^1,Y^2) \leq  e(X,Y)/2\\
e(X,Y)/2 - n \leq e(X^2,Y^1)+e(X^2,Y^2) \leq  e(X,Y)/2\\
e(X,Y)/2 - n \leq e(X^1,Y^1)+e(X^2,Y^1) \leq  e(X,Y)/2\\
e(X,Y)/2 - n \leq e(X^1,Y^2)+e(X^2,Y^2) \leq  e(X,Y)/2.
\end{split}
\end{align}
By subtracting appropriate pairs of inequalities in \eqref{eq:bound-sumXY}, we obtain the following bounds.
\begin{align} \label{eq:bound-diffXY}
\begin{split}
- n \leq e(X^1,Y^1)-e(X^2,Y^2) \leq  n\\
- n \leq e(X^1,Y^2)-e(X^2,Y^1) \leq  n\\
\end{split}
\end{align}
Recall that 
\begin{align*}
e(X,Y) &=  e(X^1,Y^1) + e(X^1,Y^2) + e(X^2,Y^1) + e(X^2,Y^2) + \deg x_s + \deg y_t \\
    &\leq e(X^1,Y^1) + e(X^1,Y^2) + e(X^2,Y^1) + e(X^2,Y^2) + n.
\end{align*}
Thus, by the pigeonhole principle, at least one of the following holds:
\begin{align} \label{eq:cases}
\begin{split}
e(X^1,Y^1)+e(X^2,Y^2) \geq  e(X,Y)/2 -n/2\\
e(X^1,Y^2)+e(X^2,Y^1) \geq  e(X,Y)/2 -n/2.
\end{split}
\end{align}

Suppose that the first case holds. Define $G^1 := G[X^1 \cup Y^1]$ and $G^2 := G[X^2 \cup Y^2]$. Combining \eqref{eq:bound-diffXY} and \eqref{eq:cases}, we obtain that
\begin{align} 
\begin{split}
|E(G^1)| &=e(X^1,Y^1) \geq  e(X,Y)/4 -3n/4 \geq (d \cdot 4^{h+1}-1)n/4 - 3n/4 \ge (d\cdot 4^h-1)n\\
|E(G^2)| &=e(X^2,Y^2) \geq  e(X,Y)/4 -3n/4 \geq (d\cdot 4^h-1)n.
\end{split}
\end{align}
We observe that graphs $G^1,G^2$ satisfy the inductive assumption (for $h-1$), so in the vertex set of $G^1$ we can find two families of $k/2$ pairwise corresponding segments $I^1_1,I^1_2,\ldots,I^1_{k/2}$ and $J^1_1,J^1_2,\ldots,J^1_{k/2}$, and in the vertex set of $G^2$ we can find two families of $k/2$ pairwise corresponding segments $I^2_1,I^2_2,\ldots,I^2_{k/2}$ and $J^2_1,J^2_2,\ldots,I^2_{k/2}$. We obtain the desired subsegments of $X$ and $Y$ by setting:
\begin{equation*}
\begin{aligned}[c]
I_i =
\begin{cases}
I^1_i & \text{ if } i \leq k/2,\\
I^2_{i-k/2} & \text{ if } i > k/2,
\end{cases}
\end{aligned}
\qquad
\begin{aligned}[c]
J_i =
\begin{cases}
J^1_i & \text{ if } i \leq k/2,\\
J^2_{i-k/2} & \text{ if } i > k/2.
\end{cases}
\end{aligned}
\end{equation*}
If the second case in \eqref{eq:cases} holds, we take $G^1 := G[X^1 \cup Y^2]$ and $G^2 := G[X^2 \cup Y^1]$, and the rest of the proof is analogous.
\end{proof}

The following statement brings the technical statement of
Lemma~\ref{lem:partition} into a more easily applicable form.

\begin{lemma} \label{lem:disjointPairs}
Let $k, r \geq 1$ be two integers and let $G$ be a bipartite graph with bipartition classes $X = \{x_1,x_2,\ldots,x_a\}$ and $Y=\{y_1,y_2,\ldots,y_b\}$ and minimum degree at least $2^9 \cdot r \cdot k$.
Then there are $k$ sets $U_1,U_2,\ldots,U_k$, and $k$ sets $W_1,W_2,\ldots,W_k$, such that:
\begin{compactenum}
\item for each $i \in [k]$ the set $U_i$ is a segment of $X$ and the set $W_i$ is a segment of $Y$,
\item for each distinct $i,j \in [k]$ we have $U_i \cap U_j = \emptyset$ and $W_i \cap W_j = \emptyset$,
\item for every $i \in [k]$, the average degree of the graph $G[U_i \cup W_i]$ is at least $r$.
\end{compactenum}
\end{lemma}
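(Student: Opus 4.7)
The plan is to apply Lemma~\ref{lem:partition} with carefully chosen parameters, then use a pigeonhole step to turn its edge-count guarantee into an average-degree guarantee. Setting $n := a+b$, the minimum-degree hypothesis immediately gives $|E(G)| \ge 2^{8}\cdot r\cdot k\cdot n$ by summing degrees over $X$ and over $Y$. I would then apply Lemma~\ref{lem:partition} with $h := \lceil\log_2(2k)\rceil$ (so that $2k \le 2^h < 4k$) and $d := r/(2k)$. A short calculation shows that both hypotheses of the lemma hold: $d \cdot 4^{h+1}$ lies between $8rk$ and $32rk$, so $d\cdot 4^{h+1}-1 > 2$ and $(d\cdot 4^{h+1}-1)\cdot n \le 2^{8}rk\cdot n \le |E(G)|$.

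Lemma~\ref{lem:partition} then produces $2^h$ pairs of segments $(I_i, J_i)$, each with at least $dn = rn/(2k)$ edges between them. Taken at face value this only guarantees average degree $2d = r/k$ in $G[I_i \cup J_i]$---a factor of $k$ too small, and this is the main obstacle. I would overcome it by exploiting the pairwise disjointness of the segments: $\sum_i(|I_i| + |J_i|) \le a+b = n$, so by a simple averaging argument at most half of the indices can satisfy $|I_i| + |J_i| > 2n/2^h$. Consequently at least $2^{h-1} \ge k$ pairs are ``small'', meaning $|I_i| + |J_i| \le 2n/2^h \le n/k$, and for each such small pair the average degree of $G[I_i \cup J_i]$ is at least $2dn/(|I_i|+|J_i|) \ge 2dk = r$, as required. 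Selecting any $k$ of these small pairs as $(U_i, W_i)$ completes the argument.

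The only reason to take $2^h \ge 2k$ rather than the naive $2^h \ge k$ is to leave slack for this filtering step: we need enough output pairs to absorb the (possibly) half that are too spread out to yield the claimed average degree from an edge count of $dn$ alone.
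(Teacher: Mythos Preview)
Your proposal is correct and follows essentially the same approach as the paper: apply Lemma~\ref{lem:partition} with $2^h \ge 2k$ output pairs, then use a pigeonhole argument on the total segment lengths to filter down to at least $k$ ``small'' pairs whose bounded vertex count turns the edge-count guarantee into the desired average-degree bound. The only cosmetic differences are the exact constant chosen for $d$ (you take $r/(2k)$, the paper takes $2r/k$) and the definition of ``small'' (you bound $|I_i|+|J_i|$, the paper bounds each of $|I_i|,|J_i|$ separately), but these choices are interchangeable and lead to the same conclusion.
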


\begin{proof}
Let $h$ be the minimum integer, such that $k' :=2^h \geq 2k$; note that $k' < 4k$. Also, define $d= 2r/k$ and $n=a+b$.
We have
$$d \cdot 4^{h+1} -1 = 4d(k')^2 -1 \geq \frac{8r}{k} \cdot (2k)^2 - 1 = 32 \cdot r \cdot k - 1 > 2.$$
Observe that the number of edges in $G$ is at least 
$$n \cdot r \cdot 2^8\cdot k = (16r/k \cdot (4k)^2)n > (4d (k')^2)n > (d \cdot 4^{h+1} -1)n.$$
Thus $G$ satisfies the assumptions of Lemma \ref{lem:partition} for $h$, $n$, and $d$. Let $I_1,I_2,\ldots,I_{k'}$ be the disjoint segments in $X$, and $J_1,J_2,\ldots,J_{k'}$ be the disjoint segments in $Y$, whose existence is guaranteed by Lemma \ref{lem:partition}. 

A segment $I_i$ ($J_i$, resp.) is called \textit{large} if $|I_i| \geq 2n/k'$ ($|J_i| \geq 2n/k'$, resp.).
A pair $(I_i,J_i)$ is \textit{large} if at least one of $I_i, J_i$ is large, otherwise the pair is \textit{small}.
Note that there are at most $n / (2n/k') = k'/2$ large segments in total.
Thus the number of small pairs is at least $k'/2 \geq k$. We obtain the segments $(U_i,W_i)$ by taking the first $k$ small pairs ($I_i,J_i)$. Clearly these segments satisfy conditions 1.\ and 2.\ of the lemma.

Now take any $i \in [k]$ and let us compute the average degree of the graph $G_i:=G[U_i \cup W_i]$. By Lemma~\ref{lem:partition}, $|E(G_i)| \geq d \cdot n$. On the other hand, since $(U_i,W_i)$ is a small pair, we have that $|V(G_i)| = |U_i \cup W_i| < 4n/k'$. Thus we obtain that the average degree of $G_i$ is
\[
\frac{2\cdot |E(G_i)|}{|V(G_i)|} > \frac{d \cdot n}{4n/k'} = \frac{dk'}{4} \geq d \; \frac{2k}{4} = \frac{2r}{k} \cdot \frac{k}{2} = r.
\]
This completes the proof.
\end{proof}

\section{The dense case}\label{sec:dense}
In this section, we prove Theorem~\ref{thm:qp} roughly in the case
when there are two linkages $\Ll$ and $\Kk$ such that their set
$\linkfrom{\Ll} \cup \linkfrom{\Kk} \cup \linkto{\Ll} \cup
\linkto{\Kk}$ of endpoints is well linked and such that the paths in
$\Ll$ and $\Kk$ intersect a lot. The formal statement proved in this
section is as follows.

\begin{lemma}\label{lem:dense}
  Let $a, b \in \mathbb{N}^+$. Let $D$ be a directed graph and $\Ll$
  and $\Kk$ be two linkages in~$D$ such that
  $\linkfrom{\Ll} \cup \linkto{\Ll} \cup \linkfrom{\Kk} \cup
  \linkto{\Kk}$ is well-linked in~$D$. Suppose that the intersection
  graph $I(\Ll, \Kk)$ has degeneracy more
  than~$327\,680 \cdot a \cdot b \cdot \log_2(|\Ll|/b)$. Then there are
  directed graphs $D_1, D_2, \ldots, D_a$ with the following
  properties:
\begin{compactenum}[(i)]
\item each $D_i$ is a subgraph of~$D$,
\item each vertex of $D$ belongs to at most four graphs~$D_i$, and
\item each graph~$D_i$ has directed treewidth at least~$b$.
\end{compactenum}
\end{lemma}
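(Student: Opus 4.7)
The plan is to use the dense intersection structure to partition the two linkages into $a$ pairwise disjoint segment pairs, each of which will host one of the subgraphs $D_i$, with the directed treewidth bound certified through Lemma~\ref{lem:dtw-bound}.

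\textbf{Setup and orderings.} First I would exploit the well-linkedness of $\linkfrom{\Ll} \cup \linkto{\Ll} \cup \linkfrom{\Kk} \cup \linkto{\Kk}$ to construct backlinkages $\Ll^{\backlink}$ and $\Kk^{\backlink}$ dual to $\Ll$ and $\Kk$ respectively, and equip both $\Ll$ and $\Kk$ with the corresponding backlinkage-induced orderings. These orderings are crucial because two consecutive paths in the ordering are linked along a cycle of $\auxg(\Ll, \Ll^{\backlink})$ (resp.\ $\auxg(\Kk, \Kk^{\backlink})$), a property I will use to chain several paths within a segment into a single longer walk.

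\textbf{Extracting a partition of the intersection graph.} By the degeneracy assumption, $I(\Ll, \Kk)$ contains a bipartite subgraph $H$ of minimum degree exceeding $\Delta := 327\,680 \cdot a \cdot b \cdot \log_2(|\Ll|/b)$. Choose $r := 640 \cdot b \cdot \log_2(|\Ll|/b)$ so that $\Delta = 2^9 \cdot a \cdot r$. View $H$ as a bipartite graph whose two sides are ordered by restricting the backlinkage-induced orderings of $\Ll$ and $\Kk$, and apply Lemma~\ref{lem:disjointPairs} with $k := a$ and this $r$. This yields $a$ pairwise disjoint segment pairs $(U_i, W_i)$ (with $U_i$ a segment of $H$'s $\Ll$-side and $W_i$ a segment of $H$'s $\Kk$-side) such that each $H[U_i \cup W_i]$ has average degree at least $r$.

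\textbf{Building $D_i$ and lower-bounding $\dtw(D_i)$.} For each $i$, define $D_i$ as the subgraph of $D$ consisting of: (a) the $\Ll$-paths indexed by $U_i$; (b) the $\Kk$-paths indexed by $W_i$; (c) the backlinkage walks induced by $U_i$ on $(\Ll, \Ll^{\backlink})$ that chain consecutive members of $U_i$, and symmetrically (d) the backlinkage walks induced by $W_i$ on $(\Kk, \Kk^{\backlink})$. To show $\dtw(D_i) \geq b$ I would invoke Lemma~\ref{lem:dtw-bound}: using the dense structure of $H[U_i \cup W_i]$ I extract $8b$ disjoint paths $P_1,\dots,P_{8b}$ obtained by concatenating $\Ll$-paths from $U_i$ along their induced backlinkage chains (each split into an initial part $A_j$ and a final part $B_j$), together with a pairwise disjoint family of connectors $\{L_{j,j'} : j \neq j'\}$, each carved out of a sub-walk built from $\Kk$-paths and induced $\Kk^{\backlink}$-pieces inside $W_i$, leaving $B_j$ and later entering $A_{j'}$. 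Lemma~\ref{lem:dtw-bound} then yields $\dtw(D_i) \geq 8b/8 = b$.

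\textbf{Congestion and the main obstacle.} The congestion bound is immediate: every vertex of $D$ lies on at most one path of each of the four linkages $\Ll, \Ll^{\backlink}, \Kk, \Kk^{\backlink}$, and because the segments $U_i$ and $W_i$ are pairwise disjoint, any given path of any of these linkages contributes to at most one $D_i$. Hence every vertex of $D$ belongs to at most four of the graphs $D_i$. The main technical obstacle is the extraction inside Step three: one must produce $\Theta(b^2)$ pairwise disjoint connectors in $H[U_i \cup W_i]$ whose endpoints are compatible with the chosen $B_j / A_{j'}$ splits, all while reusing each $\Kk$-path and each backlinkage piece at most once inside $D_i$. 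The $\log_2(|\Ll|/b)$ factor in the required degeneracy bound is expected to arise here, through an iterative selection procedure in which at each of $\Theta(\log_2(|\Ll|/b))$ stages one halves the pool of candidate $\Kk$-paths while still guaranteeing that enough crossings remain to furnish the next batch of disjoint connectors.
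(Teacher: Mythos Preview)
Your setup, the use of backlinkage-induced orderings, the application of Lemma~\ref{lem:disjointPairs}, and the congestion argument all match the paper exactly. The gap is in the treewidth lower bound for each $D_i$, where you attempt to invoke Lemma~\ref{lem:dtw-bound}.

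Lemma~\ref{lem:dtw-bound} requires $8b$ \emph{disjoint} paths $P_j$ and, for \emph{every} ordered pair $(j,j')$, a connector $L_{j,j'}$, with all $\Theta(b^2)$ connectors pairwise disjoint. Your proposed $P_j$'s are concatenations alternating between paths of $\Ll$ and of $\Ll^\backlink$; since these two linkages may share vertices arbitrarily, the resulting objects are walks of congestion~$2$, not disjoint paths. The same obstruction hits the connectors built from $\Kk$ and $\Kk^\backlink$. More seriously, high average degree in $I(\Ll,\Kk)[U_i,W_i]$ gives no control over \emph{which} $P_j$'s a given $\Kk$-walk meets or in what order along that walk, so there is no mechanism that forces a connector from $B_j$ to $A_{j'}$ to exist for every pair; your sketch of an ``iterative selection procedure'' does not address this, and it is not clear that the budget of $r=\Theta(b\log(|\Ll|/b))$ suffices to extract $\Theta(b^2)$ disjoint connectors with prescribed endpoints.

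The paper takes a different route for this step. After adding at most four auxiliary edges to close the (at most two per side) broken cycles of $\auxg(\Ll)$ and $\auxg(\Kk)$ crossing the segment boundary, $D_i$ becomes the union of two pairs of dual linkages $(U_i,U_i^b)$ and $(W_i,W_i^b)$. Passing to a subgraph of $I(U_i,W_i)$ of minimum degree $\geq r/2$ and taking induced backlinkages yields half-integral dual pairs $(\Pp,\Pp^\backlink)$ and $(\Qq,\Qq^\backlink)$ with $I(\Pp,\Qq)$ of large minimum degree. The treewidth bound then follows from Lemma~\ref{lem:high-degree-linkages-dtw}, which is proved not via Lemma~\ref{lem:dtw-bound} but by iterating balanced separations (Lemma~\ref{lem:balanced-separation}): if treewidth were small, one could repeatedly separate the starting set of $\Pp$, each time losing at most $\Oh(k)$ of the mutual intersections while shrinking $|\Pp|$ by a constant factor, contradicting the minimum-degree assumption after $\Oh(\log(|\Pp|/k))$ rounds. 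This argument natively tolerates half-integral linkages and needs only a minimum-degree hypothesis, which is exactly what the partitioning step delivers.
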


\paragraph{Proof outline} The basic idea of the proof of
Lemma~\ref{lem:dense} is as follows. We first fix a pair of linkages
$\Ll^\backlink$ and $\Kk^\backlink$ which are dual to $\Ll$ and $\Kk$,
respectively. (This is possible because of well-linkedness of the
endpoints.) The subgraphs $D_i$ that we construct will subpartition
the vertex set of each of the four linkages~$\Ll, \Ll^\backlink, \Kk, \Kk^\backlink$ and
hence each vertex of $G$ is in at most four subgraphs~$D_i$. To
construct the desired subgraphs $D_i$, we consider the
backlinkage-induced order~$\Pi_\Ll$ on $\Ll$
and $\Pi_\Kk$ on $\Kk$.
Using these orderings of the
paths of~$\Ll$ and~$\Kk$, we can apply the partitioning lemma
(Lemma~\ref{lem:disjointPairs}) to the intersection graph of~$\Ll$
and~$\Kk$, obtaining a subpartition $I_1, \ldots, I_k$ of~$\Ll$ and a
subpartition~$J_1, \ldots, J_k$ of~$\Kk$. These subpartitions have the
nice property that each intersection graph~$I(I_i, J_i)$ induced by a
pair $I_i, J_i$ contains many edges (representing intersections
between the corresponding paths) and that only a constant number of
cycles of~$\auxg(\Ll)$ and $\auxg(\Kk)$ cross $I_i$ or $J_i$. By closing
each of these crossing cycles by introducing an artificial new path,
we obtain a pair of dual linkages $I_i, I_i'$, and a pair of dual of
linkages $J_i, J_i'$. Using then
Lemma~\ref{lem:high-degree-linkages-dtw} below, we will obtain a lower
bound on the directed treewidth of the graph induced by $I_i \cup J_i \cup I_i' \cup J_i'$,
      which constitute our desired subgraph $D_i$.

\paragraph{Treewidth lower bound}
For technical reasons, we will have to work with half-integral
linkages. The intersection graph for a pair of
half-integral linkages is defined in the same way as for ordinary
linkages. 
\begin{lemma}\label{lem:high-degree-linkages-dtw}
  Let $k, d \in \mathbb{N}^+$ 
  and
  $\Pp, \Pp^\backlink, \Qq, \Qq^\backlink$ be four half-integral linkages in a
  directed graph such that $\Pp$ and $\Pp^\backlink$ are dual to each other
  and $\Qq$ and $\Qq^\backlink$ are dual to each other. Let the intersection
  graph $I(\Pp, \Qq)$ have minimum degree at least~$d$ where
  $d\ge 8k\log_{4\over 3} {({{|\Pp|}\over 24k})}+24k+4$.
  Then the graph
  $\bigcup(\Pp \cup \Pp^\backlink \cup \Qq \cup \Qq^\backlink)$ has directed treewidth at least~$k$.
\end{lemma}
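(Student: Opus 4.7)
The plan is to exhibit inside $G^\star := \bigcup(\Pp \cup \Pp^\backlink \cup \Qq \cup \Qq^\backlink)$ the combinatorial structure required by Lemma~\ref{lem:dtw-bound} with $8k$ base paths, which will certify $\dtw(G^\star) \ge 8k/8 = k$. Concretely, I aim to produce $8k$ pairwise vertex-disjoint paths $R_1,\ldots,R_{8k}$, each split as $R_i = A_i \cdot B_i$ with $A_i$ preceding $B_i$, together with $2\binom{8k}{2}$ pairwise vertex-disjoint connecting paths $L_{i,j}$ (for distinct $i,j$), each going from $B_i$ to $A_j$, all living inside $G^\star$.

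The first step is to split every $P \in \Pp$ at a good midpoint. Letting $N(P) := \{Q \in \Qq : Q \cap P \neq \emptyset\}$, I would order the paths of $N(P)$ by where they first hit $P$ and cut $P$ at the median first-hit position. This yields halves $A(P)$ and $B(P)$ together with sets $N_A(P), N_B(P) \subseteq N(P)$, each of size at least $d/2$, such that every $Q \in N_A(P)$ meets $A(P)$ and every $Q \in N_B(P)$ meets $B(P)$. After selecting an appropriate subfamily $\{P_1, \ldots, P_{8k}\} \subseteq \Pp$, the base paths are $R_i := P_i$ with $A_i := A(P_i)$ and $B_i := B(P_i)$.

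Second, I would construct the routings $L_{i,j}$ one at a time via a BFS-type exploration in the auxiliary graph $\auxg(\Qq, \Qq^\backlink)$. To build $L_{i,j}$, start at some still-unused $Q \in N_B(P_i)$ touching $B_i$, follow $Q$ forward to its endpoint, continue through the matched path of $\Qq^\backlink$ into the next $\Qq$-path, and iterate until some $\Qq$-path in the exploration hits $A_j$; finally shortcut the resulting walk to a path. Writing $S_\ell$ for the set of $\Pp$-paths reached (in their $A$-halves) after $\ell$ layers of this BFS, the minimum-degree hypothesis forces the next layer to contact a constant fraction of all previously untouched $\Pp$-paths, so $|S_\ell|$ grows geometrically by a factor of at least $4/3$. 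After $O(\log_{4/3}(|\Pp|/(24k)))$ layers one therefore covers every chosen $P_j$, which is the origin of the $\log_{4/3}$ term in the hypothesis on $d$.

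The main technical obstacle is maintaining pairwise vertex-disjointness of all $O(k^2)$ routings together with the $R_i$'s outside designated endpoints. Each routing consumes a bounded number of $\Qq$- and $\Qq^\backlink$-paths plus short pieces of $\Pp$-paths, and to keep subsequent routings feasible I would maintain, at every $P_i$ and at every stage, a ``blocked'' subset of neighbours of size $O(k)$, so that $P_i$ retains at least $d - O(k) \ge 8k$ unused $\Qq$-neighbours on each side throughout the construction. The additive $24k + 4$ slack in the hypothesis bankrolls precisely this bookkeeping, while the leading factor $8k$ in front of the logarithm accounts for the need to branch into $8k$ vertex-disjoint directions in each BFS layer in order to accommodate all concurrent routings that pass near a given $\Pp$-path. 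Once the structure is assembled, Lemma~\ref{lem:dtw-bound} applied to the $8k$ base paths immediately delivers $\dtw(G^\star) \ge k$.
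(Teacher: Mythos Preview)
Your approach has a fundamental quantitative obstruction. Lemma~\ref{lem:dtw-bound} with parameter $8k$ requires $8k(8k-1) = \Theta(k^2)$ \emph{pairwise vertex-disjoint} connecting paths $L_{i,j}$. Each such $L_{i,j}$ must leave one $\Pp$-path and reach another, and the only way to do this inside $G^\star$ is through material from $\Qq \cup \Qq^\backlink$. But the hypothesis only guarantees $|\Qq| \ge d$ (from the minimum-degree condition), and $d$ can be as small as $\Theta(k\log(|\Pp|/k))$; in particular when $|\Pp| = \Theta(k)$ one has $d = \Theta(k)$. Since $\Qq$ and $\Qq^\backlink$ are half-integral, the total vertex budget they provide supports nowhere near $\Theta(k^2)$ pairwise disjoint paths, let alone ones each of which traverses $O(\log_{4/3}(|\Pp|/k))$ layers of the BFS you describe. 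Your bookkeeping claim that only an $O(k)$-size ``blocked'' set accumulates at each $P_i$ cannot hold: with $\Theta(k^2)$ routings to be kept disjoint, the additive slack $24k+4$ is off by a factor of $k$. A secondary issue is that $\Pp$ is only a half-integral linkage, so its paths need not be pairwise disjoint, which already threatens the disjointness of your base paths $R_1,\ldots,R_{8k}$ before any routing begins.

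The paper proceeds entirely differently, by contradiction. Assuming $\dtw(G^\star) < k$, it repeatedly applies a balanced-separator lemma (Lemma~\ref{lem:balanced-separation}) to the start set $\linkfrom{\Pp_{i-1}}$, producing nested sublinkages $\Pp = \Pp_0 \supseteq \Pp_1 \supseteq \cdots$ and $\Qq = \Qq_0 \supseteq \Qq_1 \supseteq \cdots$ with $|\Pp_i| \le \tfrac{3}{4}|\Pp_{i-1}|$ while the minimum degree of $I(\Pp_i,\Qq_i)$ drops by at most $8k$ per step (this is where the dual half-integral backlinkages are used, via an injective-mapping argument bounding how many paths can cross the separator). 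After $q = \lceil \log_{4/3}(|\Pp|/24k)\rceil$ steps the geometric shrinking forces $|\Pp_q| \le 24k$, yet the degree bound forces $|\Pp_q| \ge d - 8kq > 24k$, a contradiction. The $\log_{4/3}$ and the constants $8k$, $24k+4$ arise from this separator iteration, not from a direct construction.
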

\noindent The proof of Lemma~\ref{lem:high-degree-linkages-dtw} is
inspired by the proof of Lemma 5.4 in~\cite{DBLP:conf/soda/HatzelKK19}.
We could use Lemma~5.4 here as well, but its proof, unfortunately, contains errors.
Nevertheless, we derive an incomparable bound which is better for our use since the lower bound on the degree that we need depends only linearly on $k$ whereas the lower bound claimed in Lemma~5.4~\cite{DBLP:conf/soda/HatzelKK19} is~$k^2$.
Also, we adapt the constants in the lemma for half-integral linkages.

The proof of Lemma~\ref{lem:high-degree-linkages-dtw} is based on the following
Lemma~\ref{lem:balanced-separation}. Herein, we use the following
definition. Let $D$ be a directed graph. A \emph{separation} in $D$ is
a pair $(X, Y)$ of two vertex subsets~$X, Y \subseteq V(D)$ with
$X \cup Y = V(D)$ such that there are no edges from $Y \setminus X$ to
$X \setminus Y$ in~$D$. The \emph{order} of $(X, Y)$ is $|X \cap Y|$.

\begin{lemma}[\cite{KO14}]\label{lem:balanced-separation}
  Let $w \in \mathbb{N}$. Let $G$ be a directed graph of directed
  treewidth at most $w$ and let $W \subseteq V(G)$ such that
  $|W| \geq 2w + 2$. Then there is a separation $(X, Y)$ in~$G$ of
  order at most~$w$ such that $X$ and $Y$ each contain at least
  $|W|/4$ elements of~$W$.
\end{lemma}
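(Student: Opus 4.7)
The plan is to argue from an optimal directed tree decomposition via a centroid-style argument on the tree. Recall that $\dtw(G) \le w$ means there is an arborescence $T$ together with bags $\beta(t) \subseteq V(G)$ for $t \in V(T)$ that partition $V(G)$, and guards $\gamma(e) \subseteq V(G)$ for $e \in E(T)$, satisfying the directed guarding property: if $e=(s,t)$ and $T_t$ is the subtree rooted at $t$, and $\beta(T_t) := \bigcup_{t' \in V(T_t)} \beta(t')$, then $(\beta(T_t) \cup \gamma(e),\ (V(G) \setminus \beta(T_t)) \cup \gamma(e))$ is a directed separation in $G$ of order $|\gamma(e)|$. The width bound gives $|\beta(t) \cup \Gamma(t)| \le w+1$ for every node $t$, where $\Gamma(t) := \bigcup_{e \ni t} \gamma(e)$, so in particular $|\gamma(e)| \le w+1$ and $|\beta(t) \cup \Gamma(t)| \le w+1$.

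Root $T$ at an arbitrary node $r$ and define the weight function $f(t) := |W \cap \beta(T_t)|$, which is non-increasing as one descends and satisfies $f(r) = |W|$. Walk down from $r$ by descending into any child $c$ with $f(c) \ge |W|/2$, and let $t^\star$ be the node where this walk terminates. Then $f(t^\star) \ge |W|/2$, while every child $c$ of $t^\star$ has $f(c) < |W|/2$. I would split into two cases. Case (i): some child $c$ of $t^\star$ satisfies $f(c) \ge |W|/4$. Then the edge $e=(t^\star,c)$ directly yields the desired separation: the subtree side carries $f(c) \in [|W|/4,|W|/2)$ vertices of $W$, and the complementary side carries $|W|-f(c) \ge |W|/2 \ge |W|/4$. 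Case (ii): every child $c$ of $t^\star$ has $f(c) < |W|/4$. Then I would greedily partition the children of $t^\star$ into two groups $\mathcal{A}, \mathcal{B}$ so that each of $\sum_{c\in\mathcal{A}} f(c)$ and $\sum_{c\in\mathcal{B}} f(c)$ lies in a balanced range, and distribute $W \cap \beta(t^\star)$ between the two sides; the resulting separator is contained in $\beta(t^\star) \cup \Gamma(t^\star)$, of size at most $w+1$, and the directed guarding property transfers into a directed separation between the two sides.

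The main obstacle is tightening the separator size from $w+1$ down to exactly $w$, and this is where the hypothesis $|W| \ge 2w+2$ is essential. The slack it affords is that both $|W|/4$-targets can tolerate the reassignment of one vertex of the candidate separator into either $X$ or $Y$ without violating balance, because $|W|/4 \ge (w+1)/2$ leaves room to absorb one vertex while retaining the $|W|/4$ lower bound. Concretely, if the chosen separator has size $w+1$, one identifies a single vertex $v$ in it whose reassignment to either $X$ or $Y$ preserves the directed separation (such a $v$ exists because at least one of the two sides of any order-$(w+1)$ separation in a graph of directed treewidth $\le w$ admits such a compression via the decomposition); the hypothesis $|W| \ge 2w+2$ then guarantees both balance and order $\le w$ survive simultaneously. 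The secondary subtlety in Case (ii) is that the greedy partition requires the total child weight $\sum_c f(c) = f(t^\star) - |W \cap \beta(t^\star)|$ to be large enough to admit a balanced split, which we ensure by, if necessary, strengthening the descent rule to track the threshold $3|W|/4$ rather than $|W|/2$, or by shifting bag-based $W$-vertices across the partition; since $|W \cap \beta(t^\star)| \le w+1 \le |W|/2$, there are enough $W$-vertices outside $\beta(t^\star)$ to achieve the split.
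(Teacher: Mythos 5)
You take a genuinely different route from the paper: the paper treats Lemma~6.4.10 of~\cite{KO14} as a black box, which already hands over pairwise disjoint sets $A$, $B$, $S$ with $|S|\le w$, no directed path from $B$ to $A$ in $G-S$, and both $|W\cap A|,|W\cap B|\le 3|W|/4$; from these the separation is assembled by letting $Y=S\cup B\cup R(B)$ (with $R(B)$ the set of vertices reachable from $B$) and $X=(V(G)\setminus Y)\cup S$. You instead try to re-derive that balanced-separator statement from scratch by a centroid walk in a directed tree decomposition. That is a natural idea, but your write-up has two gaps that I think are genuine and not merely notational.

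First, the ``directed guarding property'' you posit --- that for an edge $e=(s,t)$ the pair $\bigl(\beta(T_t)\cup\gamma(e),\;(V(G)\setminus\beta(T_t))\cup\gamma(e)\bigr)$ is a separation of order $|\gamma(e)|$ --- is not what the standard Johnson--Robertson--Seymour--Thomas definition of directed tree decomposition gives you. The actual guarding condition only says that every directed walk which starts and ends in $\beta(T_t)$ and leaves $\beta(T_t)$ must meet $\gamma(e)$. This does not forbid single edges from $\beta(T_t)\setminus\gamma(e)$ to the outside, nor from the outside into $\beta(T_t)\setminus\gamma(e)$, so the object produced in your Case~(i) is not obviously a separation in the sense the paper uses (no edge from $Y\setminus X$ to $X\setminus Y$). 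Converting guarding into a bona fide balanced directed separation is precisely what~\cite[Lemma~6.4.10]{KO14} encapsulates, and it is not free. Your Case~(ii) has the same issue, amplified: the set $\beta(t^\star)\cup\Gamma(t^\star)$ guards each child subtree individually, but it is not immediate that grouping children into $\mathcal{A}$ and $\mathcal{B}$ yields a separation between $\bigcup_{c\in\mathcal{A}}\beta(T_c)$ and the rest.

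Second, and more seriously, the reduction of the separator size from $w+1$ to $w$ is asserted rather than argued. You write that one can find a vertex $v$ in the order-$(w+1)$ separator ``whose reassignment to either $X$ or $Y$ preserves the directed separation (such a $v$ exists because at least one of the two sides of any order-$(w+1)$ separation in a graph of directed treewidth $\le w$ admits such a compression via the decomposition).'' No such theorem is proved or cited, and I do not believe it is true in the stated generality; even if one side of the separation can be ``pushed,'' in a directed graph removing a vertex from $X\cap Y$ must simultaneously avoid creating an edge from $Y\setminus X$ to $X\setminus Y$, and this is a one-sided condition that interacts badly with the equally one-sided guarding property. The hypothesis $|W|\ge 2w+2$ does give you enough slack on the balance side once a compressible vertex is found, but it does nothing to produce that vertex. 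This is exactly the hard step that the paper outsources to~\cite{KO14}, and your proposal leaves it unresolved. To repair the argument you would either have to prove the compression claim (which I think would require a genuinely new argument) or, like the paper, invoke the $|S|\le w$ balanced-separator lemma directly.
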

\begin{proof}
  The statement follows easily from Lemma~6.4.10 in~\cite{KO14}.
  We provide a proof for completeness. By Lemma~6.4.10 in~\cite{KO14} there exist three pairwise disjoint vertex sets
  $A, B, S \subseteq V(G)$ such that the following properties hold.
  \begin{itemize}
  \item[(i)] $W = A \cup (S \cap W) \cup B$.
  \item[(ii)] There is no directed path from $B$ to $A$ in $G - S$. 
  \item[(iii)] Both $A$ and $B$ contain at most $3|W|/4$ elements
    of~$W$.
    \item[(iv)] $|S|\leq w$.
  \end{itemize}
  Based on the sets $A, B, S$, we define the desired
  separation~$(X, Y)$. Let $R(B)$ be the set of vertices in
  $V(G) \setminus B$ reachable from $B$, that is, a vertex
  $v \in V(G)$ is in $R(B)$ if it is not in $B$ and there is a
  directed path in $G$ to~$v$ from a vertex in $B$. Note that
  $R(B) \cap A = \emptyset$ by Property~(ii). Define
  $Y = S \cup B \cup R(B)$ and $X = (V(G) \setminus Y) \cup S$. Note
  that $X \cap Y = S$. We claim that $(X, Y)$ is a separation for~$G$
  with the desired properties.

  Clearly, $X \cup Y = V(G)$. Thus, to show that $(X, Y)$ is a
  separation, it remains to show that there is no edge from
  $Y \setminus X$ to $X \setminus Y$. For the sake of contradiction,
  assume that there is such an edge~$(y, x) \in E(G)$ with
  $y \in Y \setminus X$ and $x \in X \setminus Y$. Observe that
  $y \in Y \setminus S = B \cup R(B)$ and thus $x \in B \cup R(B)$.
  Then, $x \in Y$ by definition, a contradiction. Hence, $(X, Y)$ is a
  separation. Recall that $X \cap Y = S$ and thus $(X, Y)$ is of order at
  most~$w$, as required.

  It remains to show the balancedness property. Clearly,
  $B \subseteq Y \setminus X$. Furthermore, since
  $A \cap (S \cup B \cup R(B)) = \emptyset$, we have
  $A \subseteq X \setminus Y$. Thus,
  \begin{align*}
    |W \cap (Y \setminus X)| &= |W \cap B| \leq 3|W|/4 \text{, and}\\
    |W \cap (X \setminus Y)| &= |W \cap A| \leq 3|W|/4.
  \end{align*}
  Hence,
  \begin{align*}
    |W \cap X| &\geq |W| - |W \cap (Y \setminus X)| \geq |W|/4 \text{, and}\\
    |W \cap Y| &\geq |W| - |W \cap (X \setminus Y)| \geq |W|/4.
  \end{align*}
  This completes the proof.
\end{proof}

We are now ready to prove that two pairs of half-integral linkages
whose paths intersect a lot induce a graph with large directed
treewidth.
\begin{proof}[Proof of Lemma~\ref{lem:high-degree-linkages-dtw}]
  Let $D$ be the graph containing~$\Pp, \Pp^\backlink, \Qq$, and $\Qq^\backlink$, and let $H = \bigcup(\Pp \cup \Pp^\backlink \cup \Qq \cup \Qq^\backlink)$. 
  Assume for the sake of contradiction that $H$ has directed treewidth less than $k$.
  The basic idea is to iteratively separate the paths in
  $\Pp$ and $\Qq$ using a balanced separation of small order while
  maintaining that those paths which do not intersect any of the used
  separators still intersect a lot among themselves. 
  By balancedness, this will shrink the number of paths quickly, but by high intersection, there will always
  be many paths left, giving a contradiction.

  Define $q\df \lceil \log_{4\over 3} {\bigl({{|\Pp|}\over 24k}\bigr)}\rceil $. 
  We inductively define two sequences of linkages $\Pp=\Pp_0\speq \Pp_1\speq\cdots\speq\Pp_q$ and $\Qq=\Qq_0\speq \Qq_1\speq\cdots\speq\Qq_q$ and prove that they satisfy the following conditions for each $i \in \{0,\ldots,q\}$.
  \begin{itemize}
  \item[(i)] If $i > 0$, then $|\Pp_i|\le{3\over4}|\Pp_{i-1}|$.
  \item[(ii)] There exist quarter-integral linkages
    $\Pp_i^\backlink, \Qq_i^\backlink$ which are dual to $\Pp_i$ and
    $\Qq_i$, respectively.
  \item[(iii)] The minimum degree of $I(\Pp_i,\Qq_i)$ is at least $d-8ik$.
  \end{itemize}

  For the induction beginning, we define $\Pp_0 \df \Pp$ and
  $\Qq_0 \df \Qq$. By the preconditions of the lemma, it is clear that
  the above conditions are satisfied; for Condition~(iii), observe
  that $\Pp^\backlink$ and $\Qq^\backlink$ represent the required dual
  linkages $\Pp^\backlink_0$ and $\Qq^\backlink_0$.

  Now suppose that $i > 0$ and that $\Pp_{i - 1}$ and $\Qq_{i - 1}$
  have already been defined and that they satisfy the conditions. Let
  $A_i$ be the starting set of linkage $\Pp_{i-1}$, that is,
  $A_i = \linkfrom{\Pp_{i - 1}}$. We use
  Lemma~\ref{lem:balanced-separation} with $W = A_i$ to get a separation $(X_i, Y_i)$
  and a corresponding separator $S_i \df X_i \cap Y_i$ of size at most
  $k$ such that $X_i$ and $Y_i$ both contain at least $|A_i|/4$ elements of $A_i$. To see that Lemma~\ref{lem:balanced-separation} is applicable, recall that $d \geq 8kq + 24k + 4$ and thus \[|A_i| = |\Pp_{i-1}| \geq d-8k(i-1) \geq 8kq + 24k + 4 - 8k(i - 1) \geq 2k+2.\]
  Recall that there is no directed path from $Y_i$ to $X_i$ avoiding $S_i$.
  We define
  \[
    \Pp_i \df \{P\in\Pp_{i-1}\mid P\cap X_i=\emptyset\}\text{\qquad and \qquad} \Qq_i\df \{Q\in\Qq_{i-1}\mid Q\cap X_i=\emptyset\}.
  \]
  Clearly, we have $\Pp_i \subseteq \Pp_{i - 1}$ and $\Qq_i \subseteq \Qq_{i - 1}$. We claim that Conditions~(i) to~(iii) are satisfied. Condition~(i)
  is straightforward since at least $1\over 4$ of the paths $\Pp_i$ start in $X_i$. 

  Now consider Condition~(ii). 
  We define $\Pp_i^\backlink$ to be the backlinkage induced by $\Pp_i$ on $(\Pp,\Pp^\backlink)$ and
  $\Qq_i^\backlink$ to be a backlinkage induced by $\Qq_i$ on $(\Qq,\Qq^\backlink)$.
  Since $\Pp$, $\Pp^\backlink$, $\Qq$, and $\Qq^\backlink$ are half-integral, $\Pp_i^\backlink$ and $\Qq_i^\backlink$ are quarter-integral.

It remains to show Condition~(iii). The condition is trivial if $i = 0$. If $i > 0$, we first prove the following claim:

\begin{claim}\label{cla:cutSize}
  At most $8k$ paths from linkage $\Dd \in \{\Pp_{i - 1},\Qq_{i - 1}\}$ with corresponding dual linkage $\Dd^\backlink$ can intersect both $Y_i$ and $X_i$.
\end{claim}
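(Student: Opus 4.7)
The plan is to split the paths $P \in \Dd$ that intersect both $X_i$ and $Y_i$ into two groups depending on whether $P$ meets the separator $S_i$. Because $\Dd$ is a sublinkage of a half-integral linkage (hence half-integral) and $|S_i| \leq k$, at most $2k$ paths of $\Dd$ can touch $S_i$. It remains to bound the number of paths in the second group: those intersecting both $X_i$ and $Y_i$ while avoiding $S_i$.

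For any path $P$ in the second group, the separation property (no edge from $Y_i \setminus X_i$ to $X_i \setminus Y_i$) forces $P$ to first visit $X_i \setminus Y_i$ and only then $Y_i \setminus X_i$; since $P$ avoids $S_i$, it therefore makes exactly one ``forward crossing'' from $X_i \setminus Y_i$ to $Y_i \setminus X_i$. To control the total number of such forward crossings globally, I would look at the collection of closed walks in $D$ associated with the cycles of $\auxg(\Dd, \Dd^\backlink)$. Since $\Dd$ is half-integral and $\Dd^\backlink$ is quarter-integral (as $\Pp_{i-1}^\backlink$ is the backlinkage induced by $\Pp_{i-1}$ on the half-integral pair $(\Pp, \Pp^\backlink)$, and analogously for $\Qq$), each vertex of $D$ is visited at most $2+4 = 6$ times in total by these closed walks, and in particular the total number of visits to $S_i$ across all closed walks is at most $6|S_i| \leq 6k$.

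The key step is the following observation within a single closed walk: the number of forward transitions from $X_i \setminus Y_i$ to $Y_i$ equals the number of backward transitions from $Y_i$ to $X_i \setminus Y_i$ (the walk is closed), and each backward transition must pass through at least one vertex of $S_i$. Summing over all closed walks, the total number of forward crossings is thus at most the total number of $S_i$-visits by the walks, which is at most $6k$. Since every path in the second group contributes exactly one forward crossing to its containing closed walk, that group contains at most $6k$ paths, giving a total of $2k + 6k = 8k$ as required. The step I expect to need the most care is the transition-counting argument inside closed walks, in particular the verification that the congestion bound of $6$ does hold uniformly and that a forward crossing by a $\Dd$-path cannot be double-counted with a forward crossing by a $\Dd^\backlink$-path.
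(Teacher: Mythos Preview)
Your argument is correct and follows essentially the same approach as the paper: both split off at most $2k$ paths of $\Dd$ that are forced through $S_i$, and then bound the remaining ``forward-only'' paths by $6k$ using the cycle structure of $\auxg(\Dd,\Dd^\backlink)$ together with the congestion bounds ($2$ for $\Dd$, $4$ for $\Dd^\backlink$). The only cosmetic difference is that the paper builds an explicit injection from each remaining path backwards along its cycle to a path of $\Dd\cup\Dd^\backlink$ meeting $S_i$, whereas you obtain the same bound by a global in/out balance of the set $X_i\setminus Y_i$ along the closed walks; the concerns you raise (congestion~$6$, possible double counting with $\Dd^\backlink$-crossings) are indeed harmless, since each group-2 path occupies its own segment of the closed walks and only an upper bound on total forward crossings is needed.
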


\begin{claimproof}
  Clearly, there are at most $2k$ paths where a vertex in $Y_i$ precedes a vertex in $X_i$ since such a path has to pass through~$S_i$. Say that such a path is of the \emph{first type}.
  In fact, there are at most $2k$ paths of the first type in the half-integral linkage $\Dd$.
 
  Next, we bound the number of paths $P \in \Dd$ that go from a vertex in $X_i$ to a vertex in $Y_i$ and are not of the first type; say that such paths $P$ are of the \emph{second type}.
  We claim that there is an injective mapping~$M$, mapping each path~$P$ of the second type to some path $Q \in \Dd \cup \Dd^\backlink$ such that $Q$ has nonempty intersection with~$S_i$. 
  First, observe that $P$ has to start in $X_i$, because otherwise it is also of the first type.
  Denote by $s \df \pathfrom{P} \in X_i$ the starting vertex of $P$.
  Since $\Dd^\backlink$ is dual to $\Dd$, there is a path $Q_1 \in \Dd^\backlink$ that ends in~$s$.
  Either $Q_1$ intersects $S_i$, in which case we put $M(P) \df Q_1$, or not.
  In the second case, there is a path $Q_2 \in \Dd$ with $\pathto{Q_2} = \pathfrom{Q_1}$.
  Again, either $Q_2$ intersects $S_i$, in which case we put $M(P) \df Q_2$, or not.
  Continuing in this way, we will find~$Q_i \in \Dd \cup \Dd^\backlink$ such that $Q_i$ intersects~$S_i$ since, in each step in which $Q_i$ does not intersect~$Y_i$ the number of paths in $(\Dd \cup \Dd^\backlink) \setminus \{Q_i \mid i \in \mathbb{N}\}$ decreases, and there is at least one path in $(\Dd \cup \Dd^\backlink) \setminus \{Q_i \mid i \in \mathbb{N}\}$ which does intersect~$Y_i$; namely the path $R \in \Dd^\backlink$ with $\pathto{P} = \pathfrom{R}$. Furthermore, by definition no path in $\Dd \cup \Dd^\backlink$ will be defined as~$Q_i$ for two different paths~$P$. Thus, the mapping~$M$ that we construct is injective.

  Let $\mathcal{R}$ be the set of paths of the second type. Observe that
  $|M(\mathcal{R}) \cap \Dd^\backlink| \leq 4k$ since $\Dd^\backlink$ is
  quarter-integral by Condition~(iii). Furthermore,
  $|M(\mathcal{R}) \cap \Dd| \leq 2k$ since $\Dd$ is half-integral.

  Thus, overall there are at most $8k$ paths in $\Dd$ that intersect
  both $X_i$ and $Y_i$.
\end{claimproof}

Now we can prove Condition~(iii) when $i > 0$. We first show that there is at least one path $P$ in $\Pp_i$.
Let $\Pp_{i - 1}^Y$ be the set of paths in~$\Pp_i$ that start in~$Y_i$.
Note that $\Pp_i \subseteq \Pp_{i - 1}^Y$.
By choice of the separation~$(X_i, Y_i)$, we have $|\Pp_{i - 1}^Y| \geq |\Pp_{i - 1}|/4$.
By Condition~(iii) of the induction assumption we have $|\Pp_{i - 1}| \geq d - 8(i - 1)k$ and thus $|\Pp_{i - 1}^Y| \geq (d - 8(i - 1)k)/4$.
Since each path in $\Pp_{i - 1}^Y$ intersects~$Y_i$, Claim~\ref{cla:cutSize} shows that at most $8k$ paths in~$\Pp_{i - 1}^Y$ intersect~$X_i$.
Thus, the number of paths in $\Pp_i$ is at least $|\Pp_{i - 1}^Y| - 8k \geq (d - 8(i - 1)k)/4 - 8k$. Since $d\ge 8kq+24k+4$ by precondition of the lemma, we have 
\[{1\over 4} (d-8k(i-1)) - 8k \ge {1\over 4} (d-8ki+8k -32k) \ge {1\over 4} (d-8ki-24k)
  \ge 1.\]
Thus, indeed, there is a path $P \in \Pp_i$.
Path $P$ intersects with at least $d-8k(i-1)$ paths in $\Qq_{i-1}$ by the induction assumption.
At most $8k$ of them intersect with $X_i$ so $|\Qq_i|\ge d-8ki$.
This gives us several paths in $\Qq_i$ avoiding $X_i$.
We apply the previous argument symmetrically on one such path in $\Qq_i$ to get $|\Pp_i| \ge d-8ki$.
To conclude the proof of Condition (iii) observe that such arguments hold in fact for each path in either $\Pp_i, \Qq_i$.

We finish the proof of the lemma by showing that Conditions (i) and (iii) are in contradiction for some~$i \in [q]$.
Observe that these two conditions imply $d - 8ki \le |\Pp_i| \le ({3\over 4})^i|\Pp_0|$.
We show that $d - 8kq > ({3\over 4})^q|\Pp_0|$.
Since the conditions hold for $i = 0$, there is thus some smallest $i \in [q]$ for which $\Pp_i$ and $\Qq_i$ are well defined but the Conditions (i) and (iii) contradict each other.
Since $d>8kq+24k+4$ by precondition of the lemma, we have $d - 8kq > 24k + 4$. By definition of $q$ on the other hand \[({3\over 4})^q|\Pp_0| = \frac{|\Pp|}{{4 \over 3}^{\lceil \log_{4 \over 3}(|\Pp|/24k)\rceil}} \leq \frac{|\Pp|}{{4 \over 3}^{\log_{4 \over 3}(|\Pp|/24k)}} = 24k.\] Thus, indeed $d - 8kq > ({3\over 4})^q|\Pp_0|$, giving the desired contradiction.
\end{proof}

\paragraph{Main proof of the dense case}
We are now ready to prove the main lemma of this section.
\begin{proof}[Proof of Lemma~\ref{lem:dense}]
  Let $d = 327\,680\cdot a \cdot b \cdot \log_2(|\Ll|/b)$. Since 
  $I(\Ll, \Kk)$ is not $d$-degenerate, it contains an induced
  subgraph~$I'$ of minimum degree larger than~$d$. Redefine $\Ll$ and
  $\Kk$ to be the sublinkages of $\Ll$ and $\Kk$ contained in this
  subgraph~$I'$, that is, $\Ll \df \Ll \cap V(I')$ and
  $\Kk := \Kk \cap V(I')$. Note that $|\Ll| > d$, $|\Kk| > d$, the
  size of $\Ll$ only decreases, that is, it remains true that $d \geq 327\,680\cdot a \cdot b \cdot \log_2(|\Ll|/b)$, and note that
  $\linkfrom{\Ll} \cup \linkto{\Ll} \cup \linkfrom{\Kk} \cup
  \linkto{\Kk}$ remains well-linked.

  Let $\Ll^\backlink$ be
  a linkage in~$D$ from $\linkto{\Ll}$ to $\linkfrom{\Ll}$ and let
  $\Kk^\backlink$ be a linkage in~$D$ from $\linkto{\Kk}$ to
  $\linkfrom{\Kk}$. Note that $\Ll^\backlink$ and $\Kk^\backlink$
  exist because
  $\linkfrom{\Ll} \cup \linkto{\Ll} \cup \linkfrom{\Kk} \cup
  \linkto{\Kk}$ is well linked.

  We focus on $\auxg(\Ll)$ and $\auxg(\Kk)$.
  Take backlinkage-induced orderings $(L_1, \ldots, L_{|\Ll|})$
  of $\Ll$ 
  and $(K_1, \ldots, K_{|\Kk|})$ of $\Kk$.
  Apply Lemma~\ref{lem:disjointPairs} with $k = a$, $r = 640b\log_2(|\Ll|/b)$, 
  $G = I(\Ll, \Kk)$, $X = \{L_1, \ldots, L_{|\Ll|}\}$, and
  $Y = \{K_1, \ldots, K_{|\Kk|}\}$, obtaining $a$ sets
  $U_1, \ldots, U_a$ and $a$ sets
  $W_1, \ldots, W_a$ with the corresponding properties. To see that Lemma~\ref{lem:disjointPairs} is applicable, observe that $I(\Ll, \Kk)$ has minimum degree at least $327\,680 \cdot a \cdot b \log_2(|\Ll|/b) = 2^9 \cdot 640b\log_2(|\Ll|/b) \cdot a = 2^9 \cdot r \cdot k$.
  Observe for later
  on that, for each $i \in [a]$, the intersection graph $I(U_i, W_i)$ of the two linkages $U_i$ and $W_i$ has average degree at least $640b\log_2(|\Ll|/b)$ by property 3\ of Lemma~\ref{lem:disjointPairs}.

  Now define,
  for each $i \in [a]$, a graph $D_i$ as follows. Initially, take the
  union of all paths in $U_i$ and $W_i$. Then, for each edge~$(L, L')$
  of $\auxg(\Ll)$ such that $L, L' \in U_i$, add to $D_i$ the unique
  path $P \in \Ll^\backlink$ that connects $L$ and $L'$, that is,
  $\pathto{L} = \pathfrom{P}$ and $\pathto{P} = \pathfrom{L'}$.
  Similarly, for each edge~$(K, K')$ of $\auxg(\Kk)$ such that
  $K, K' \in W_i$, add to $D_i$ the unique path $Q \in \Kk^\backlink$ with
  $\pathto{K} = \pathfrom{Q}$ and $\pathto{Q} = \pathfrom{K'}$. In
  formulas:
  \begin{multline*}
    U'_i \ \df\ \{P \in \Ll^\backlink \mid \exists (L, L') \in E(\auxg(\Ll))
    \colon \\
    L, L' \in U_i \wedge \pathto{L} = \pathfrom{P}\wedge
    \pathto{P} = \pathfrom{L'}\}
  \end{multline*}  
  and
 \begin{multline*}
    W'_i \ \df\  \{Q \in \Kk^\backlink \mid \exists (K, K') \in E(\auxg(\Kk))
    \colon \\
    K, K' \in W_i \wedge \pathto{K} = \pathfrom{Q}\wedge
    \pathto{Q} = \pathfrom{K'}\}.
  \end{multline*}  
  We set
  \[ D_i \ \df\ \bigcup(U_i \cup W_i \cup U'_i \cup W'_i).\]
  
  We claim that $D_i$ satisfies the required properties. Clearly,
  $D_i$ is a subgraph of~$D$, giving property~(i). To see
  property~(ii), consider a
  linkage~$\Pp \in \{\Ll, \Ll^\backlink, \Kk, \Kk^\backlink\}$. We claim that no two
  subgraphs $D_i$, $D_j$ contain the same path of~$\Pp$. This claim
  follows indeed from property 2.\ of Lemma~\ref{lem:disjointPairs},
  stating that $U_i \cap U_j = \emptyset$ and
  $W_i \cap W_j = \emptyset$ and inspecting the definition of~$D_i$
  and~$D_j$. Thus, $\{V(D_i) \mid i \in [a]\}$ is a partition of a
  subset of the vertex set~$V(\Pp)$ of the paths in~$\Pp$. Thus, each
  vertex $v \in V(D)$ occurs in at most four subgraphs~$D_i$, showing
  property~(ii).

  It remains to show property~(iii), the lower bound on the directed
  treewidth of~$D_i$. We aim to modify~$D_i$, increasing the directed
  treewidth by at most a constant, to obtain a graph~$D^{(2)}_i$ which
  is the union of two pairs of dual half-integral linkages such that two linkages contained in distinct pairs intersect a lot. Then we can apply
  Lemma~\ref{lem:high-degree-linkages-dtw}, giving a lower bound
  on the directed treewidth of~$D^{(2)}_i$ which then implies a lower
  bound on the directed treewidth of~$D_i$.

  \newcommand{\zeroout}{\textsf{out}}
  \newcommand{\zeroin}{\textsf{in}} We first modify~$D_i$ to obtain a
  graph~$D^{(1)}_i$ which is the union of two pairs of dual linkages.
  Recall the orderings $\vec{\Ll} \df (L_1, \ldots, L_{|\Ll|})$ and
  $\vec{\Kk} \df (K_1, \ldots, K_{|\Kk|})$ on $\Ll$ and $\Kk$,
  respectively, which we have defined above. By property~1.\ of
  Lemma~\ref{lem:disjointPairs}, $U_i$ is a segment of~$\vec{\Ll}$ and
  $W_i$ is a segment of~$\vec{\Kk}$. Hence, by the way we have defined
  $\vec{\Ll}$, there are at most two
  cycles~$C$ in $\auxg(\Ll)$ which are not contained in $U_i$ or disjoint with $U_i$, that is
  $V(C) \setminus U_i \neq \emptyset$ and $V(C) \cap U_i \neq \emptyset$. Call such a cycle
  \emph{broken}. Similarly, there are at most two cycles $C$ in
  $\auxg(\Kk)$ such that $V(C) \setminus W_i \neq \emptyset$ and $V(C) \cap W_i \neq \emptyset$. 
  Call
  such a cycle \emph{broken} as well. For each broken cycle~$C$, do
  the following operation on~$D_i$ to obtain~$D^{(1)}_i$. If $C$ is in
  $\auxg(\Ll)$, let $L^C_\zeroout$ be the vertex of outdegree zero in
  the subgraph $\auxg(\Ll)[V(C) \cap U_i]$ and let $L^C_\zeroin$
  be the vertex of indegree zero. Add the directed
  edge~$(\pathto{L^C_\zeroout}, \pathfrom{L^C_\zeroin})$ to~$D_i$.
  Proceed analogously if $C$ is in $\auxg(\Kk)$: Let $K^C_\zeroout$ be
  the vertex of outdegree zero in the subgraph $\auxg(\Kk)[V(C) \cap W_i]$
  and let $K^C_\zeroin$ be the vertex of indegree zero,
  and add the directed
  edge~$(\pathto{K^C_\zeroout}, \pathfrom{K^C_\zeroin})$ to~$D_i$. In
  this way, we add at most four edges to~$D_i$, obtaining~$D^{(1)}_i$.
  Note that adding an edge increases the directed treewidth by at most
  one\footnote{In the corresponding robber-cop game (see~\cite{JohnsonRST01}), we can always guard the new edge with
    an additional cop.}, and hence
  $\dtw(D^{(1)}_i) \leq \dtw(D_i) + 4$.

  We claim that $D^{(1)}_i$ is the union of two pairs of dual
  linkages. To see this, note first that $U_i$ and $W_i$ are linkages
  in~$D^{(1)}_i$. Now consider
  \[U^b_i \ \df\ U'_i \cup \{(\pathto{L^C_\zeroout}, \pathfrom{L^C_\zeroin}) \mid C
    \text{ a broken cycle in }\auxg(\Ll)\}\]
  and
  \[W^b_i \df W'_i \cup \{(\pathto{K^C_\zeroout},
    \pathfrom{K^C_\zeroin}) \mid C \text{ a broken cycle in
    }\auxg(\Kk)\},\] wherein $L^C_\zeroin, L^C_\zeroout, K^C_\zeroin$,
  and $K^C_\zeroout$ are defined as above. Clearly,
  $D^{(1)}_i = \bigcup (U_i \cup W_i \cup U^b_i \cup W^b_i)$.
  Moreover, both $U^b_i$ and $W^b_i$ are linkages because $U'_i$ and
  $W'_i$ are linkages and because
  $L^C_\zeroin, L^C_\zeroout, K^C_\zeroin$, and $K^C_\zeroout$ have
  indegree or outdegree zero in $\auxg(\Ll)[V(C)]$ or
  $\auxg(\Kk)[V(C)]$, respectively. Finally, by definition, $U_i$ and
  $U^b_i$ are dual to each other and $W_i$ and $W^b_i$ are dual to
  each other. Thus, $D^{(1)}_i$ is the union of two pairs of dual
  linkages, as claimed.
  
  In order to apply Lemma~\ref{lem:high-degree-linkages-dtw}, we need
  a pair of linkages whose intersection graph has a large minimum
  degree. So far, the linkages which define~$D^{(1)}_i$ guarantee only
  large average degree (via property 3.\ of
  Lemma~\ref{lem:disjointPairs}). We now derive a subgraph $D^{(2)}_i$ of $D^{(1)}_i$ such that $D^{(2)}_i$ is the union of
  two pairs of dual half-integral linkages
  $(\Pp, \Pp^\backlink), (\Qq, \Qq^\backlink)$ and $I(\Pp, \Qq)$ has large minimum
  degree.
  To achieve this, recall
  that the intersection graph $I(U_i, W_i)$ of the two linkages $U_i$,
  $W_i$ in $D^{(1)}_i$ has average degree at least $640b\log_2(|\Ll|/b)$. 
  Hence, there is a subgraph~$I'$ of $I(U_i, W_i)$ with minimum degree
  at least~$320b\log_2(|\Ll|/b)$. 
  Let $\Pp \subseteq U_i$ be the sublinkage of $U_i$
  contained in $I'$, that is $\Pp = U_i \cap V(I')$. Similarly, let
  $\Qq = W_i \cap V(I')$.

  We define $\Pp^\backlink$ to be the backlinkage induced by $\Pp$
  on $(U_i, U^b_i)$ and $\Qq^\backlink$ to be the backlinkage induced
  by $\Qq$ on $(W_i,W^b_i)$. Note that $\Pp^\backlink$ and $\Qq^\backlink$
  are half-integral and dual to $\Pp$ and $\Qq$, respectively.

  Take now the subgraph~$D^{(2)}_i$ to be the union
  $\bigcup(\Pp \cup \Pp^\backlink \cup \Qq \cup \Qq^\backlink)$. Then
  apply Lemma~\ref{lem:high-degree-linkages-dtw} to
  $\Pp, \Pp^\backlink, \Qq, \Qq^\backlink$ with $k = b + 4$ and
  $d = 320b\log_2(|\Ll|/b)$. To see that the preconditions of 
  Lemma~\ref{lem:high-degree-linkages-dtw} are satisfied, first recall
  that the intersection graph $I(\Pp, \Qq)$ has minimum degree at
  least $320b\log_2(|\Ll|/b)$. Furthermore, 
  \begin{multline*}
    d = 320b\log_2\frac{|\Ll|}{b} \geq 200b \log_2\frac{|\Ll|}{b} + 120b + 4 \geq \frac{5 \cdot 40b}{2} \log_2\frac{|\Ll|}{b} + 120b + 4 \geq {}\\
    \frac{8 \cdot 5b}{\log_2(4/3)}\log_2\frac{|\Ll|}{b} + 24(5b) + 4 \geq 8
    \cdot (b + 4) \log_{4/3}\frac{|\Ll|}{24(b + 4)} + 24(b + 4) + 4 =\\
     8k\log_{4/3}\frac{|\Ll|}{24k} + 24k + 4,
  \end{multline*}
  and thus indeed the preconditions of
  Lemma~\ref{lem:high-degree-linkages-dtw} are satisfied. Thus, the
  directed treewidth of $D^{(2)}_i$ is at least $b + 4$. Since
  $D^{(2)}_i$ is a subgraph of $D^{(1)}_i$ and
  $\dtw(D_i) \geq \dtw(D^{(1)}_i) - 4$, we have $\dtw(D_i) \geq b$, as
  required.
\end{proof}

\section{Wrapping up the proof of Theorem~\ref{thm:qp}}\label{sec:main}

\begin{proof}[Proof of Theorem~\ref{thm:qp}]
  Let $G$ be a directed graph of $\dtw(G) \geq c \cdot a^6b^{8}\log^2(ab)$, where $c$ is a large constant, whose value will follow from the reasoning below. First, we invoke Lemma~\ref{lem:path-system} with $\beta=2^{37}a^2b^3\log(ab)$ and $\alpha=8ab$ (here we assume that $c$ is sufficiently large so that the assumption is satisfied).   We obtain a set of vertex-disjoint paths $P_1,\ldots,P_{8ab}$ and sets $A_i,B_i\subseteq V(P_i)$, where $A_i$ appears before $B_i$ on $P_i$, and $|A_i|= |B_i|= 2^{37}a^2b^3\log(ab)$, and the set
$
  \bigcup_{i=1}^{8ab} A_i\cup B_i
$
is well-linked.
Denote by $\Ll_{i,j}$ a linkage from $B_i$ to $A_j$. 

We split the $8ab$ paths $P_i$ into $a$ segments,
each consisting of $8b$ paths. 
Formally, for every $\iota \in [a]$ we define $I_\iota = \{j~|~8(\iota-1)b < j \leq 8\iota b\}$.

Now we set $r = 64ab^2$ and  create an auxiliary $r$-colored graph $H$, whose vertices will be paths of appropriately chosen linkages $\Ll_{i,j}$. More specifically, for every $\iota \in [a]$, and every $i,j \in I_\iota$, we introduce a vertex for every path in $\Ll_{i,j}$ and color it $(i,j)$. Two vertices of $H$ are adjacent if and only if their corresponding paths share a vertex in $G$. Note that for two linkages $\Ll_{i,j}$ and $\Ll_{i',j'}$, the graph $H[\Ll_{i,j} \cup \Ll_{i',j'}]$ is precisely the intersection graph $I(\Ll_{i,j},\Ll_{i',j'})$.

We set $d\df2^{27} ab\log(ab)$ and consider two cases:
\begin{enumerate}
  \item[(i)] for all $i,j,i',j'$ the graph $I(\Ll_{i,j},\Ll_{i',j'})$ is $d$-degenerate. 
  \item[(ii)] there exist $i,j,i',j'$, for which the graph $I(\Ll_{i,j},\Ll_{i',j'})$ is not $d$-degenerate.
\end{enumerate}
An intuition behind case (i) is that for each subgraph of $H$ there is always a path (in $G$) such that it shares a vertex with at most $d$ paths from all used linkages back.

\smallskip
{\bf Case (i)}~~We use Lemma~\ref{lem:degenerate} on $H$.
Graph $H$ has $64ab^2$ color classes such that for each $(i,j)\neq (i',j')$ the graph $H[\Ll_{i,j}\cup\Ll_{i',j'}]$ is $d$-degenerate.
Note that  $|\Ll_{i,j}|=2^{37}a^2b^3\log(ab)\ge 4e(r-1)d$ is sufficiently large to satisfy the last assumption of the lemma.
We are given an independent set $x_1,\ldots, x_{64ab^2}$ that represents pairwise disjoint paths $L_{i,j}$ from $B_i$ to $A_j$ for all $i,j \in I_\iota$. 
We also recall that $A_i$ and $B_i$ lie on $P_i$ and all $P_i$'s are pairwise disjoint.

Let $G_\iota$ consist of all paths $P_i$ for $i \in I_\iota$ and $L_{i,j}$ for $i,j \in I_\iota$.
  By Lemma~\ref{lem:dtw-bound} for $k=8b$ we obtain $\dtw(G_\iota)\ge b$ while each vertex is in at most 2 such subgraphs.
  Indeed, each vertex can appear only once on some $P_i$ and once on some $L_{i,j}$.

\smallskip

{\bf Case (ii)}~~The claim follows from Lemma~\ref{lem:dense}.
Since $|\Ll|=2^{37}a^2b^3\log(ab)$ then
$d=2^{27} ab\log(ab)>2^{19}ab\log(2^{37}a^2b^2\log(ab))$.
\end{proof}

\section{Improved bound for cycles: Proof of Theorem~\ref{thm:dtw-ep-all}}\label{sec:imp}

This section is devoted to the proof of Theorem~\ref{thm:dtw-ep-all}.
We follow the outline of Section~\ref{sec:main}, but circumvent the usage of Lemma~\ref{lem:path-system} to avoid the quadratic blow-up stemming from it.

The proof of the cases $p=4$, $p=3$, and $p=2$ differ only in minor details. We first present the proof for the case $p=4$ in Section~\ref{ss:imp1},
    abstracting the common parts of the proofs as independent lemmas, and then continue with the proof of the case $p=2$ in Section~\ref{ss:imp2}.
A simple mixture of the tricks used for the cases $p=4$ and $p=2$ yields the proof for the case $p=3$ and is discussed in Section~\ref{ss:imp3}.

\subsection{Case $p=4$}\label{ss:imp1}
The crucial replacement of Lemma~\ref{lem:path-system} is the following.
\begin{lemma}\label{lem:walk-system}
Let $G$ be a directed graph, $a, b, k \geq 1$ be integers, and let $D$ be a well-linked set in $G$ of size $4(a+k)b$. 
If $G$ does not contain a family of $k$ cycles such that every vertex of $G$ is in at most two of the cycles, then 
there exists a family $\Pp = \{P_1,P_2,\ldots,P_a\}$ of walks in $G$ and sets $A_i, B_i \subseteq V(P_i)$ for every $1 \leq i \leq a$ such that
\begin{enumerate}
\item $\Pp$ is of congestion $2$,
\item the sets $A_i$ and $B_i$ are of size $b$ each and are pairwise disjoint, 
\item for every $1 \leq i \leq a$, all vertices of $A_i$ appear on $P_i$ before all vertices of $B_i$, and
\item $\bigcup_{i=1}^a A_i \cup B_i$ is well-linked in $G$.
\end{enumerate}
\end{lemma}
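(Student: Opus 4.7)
The plan is to use well-linkedness of $D$ to construct a pair of dual linkages $\Ll,\Ll^\backlink$ whose auxiliary graph $\auxg(\Ll,\Ll^\backlink)$ decomposes into directed cycles, each of which yields a half-integral cycle in $G$. Since the hypothesis forbids $k$ half-integral cycles, the number of cycles of $\auxg$ must be strictly less than $k$, and their total length will be large enough to cut out $a$ walks $P_1,\ldots,P_a$ with the required structure.

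First I would partition $D=X\cup Y$ with $|X|=|Y|=2(a+k)b$, say by taking $X$ as the ``odd'' and $Y$ as the ``even'' elements of an arbitrary enumeration of $D$. Using well-linkedness of $D$ twice, I obtain a linkage $\Ll$ of order $2(a+k)b$ from $X$ to $Y$ in $G-(D\setminus(X\cup Y))=G$, and a dual linkage $\Ll^\backlink$ from $Y$ to $X$ of the same order. Each of $\Ll$ and $\Ll^\backlink$ is a linkage, hence of congestion~$1$, so $\Ll\cup\Ll^\backlink$ has congestion at most~$2$. Moreover, no path of $\Ll$ can pass through a vertex $d\in D$ internally: such a $d$ is already an endpoint of some path $L'\in\Ll$, and the vertex-disjointness of the linkage would be violated. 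The analogous statement holds for $\Ll^\backlink$.

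Next I would analyze $\auxg(\Ll,\Ll^\backlink)$, a disjoint union of directed cycles $C_1,\ldots,C_r$ with $\sum_{j=1}^r|C_j|=2(a+k)b$. Each $C_j$ gives rise, by the construction in Section~\ref{sec:prelim}, to a closed walk $W_j$ in $G$ formed by concatenating the $\Ll$- and $\Ll^\backlink$-paths corresponding to the vertices and arcs of $C_j$; the walks $W_1,\ldots,W_r$ collectively have congestion at most~$2$. Extracting a simple directed cycle from each $W_j$ yields a family of $r$ cycles in $G$ with vertex congestion at most~$2$, so the hypothesis forces $r<k$. Now each $W_j$ visits exactly $2|C_j|$ vertices of $D$---each visited exactly once, since $D$ is not touched internally by $\Ll\cup\Ll^\backlink$---alternately from $X$ and $Y$. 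I would cut each $W_j$ at suitably chosen $D$-vertices into consecutive sub-walks, each containing exactly $2b$ visits to $D$, producing $\lfloor|C_j|/b\rfloor$ sub-walks from $W_j$. The total number of sub-walks is
\[
  \sum_{j=1}^r\Bigl\lfloor\frac{|C_j|}{b}\Bigr\rfloor \;\geq\; \sum_{j=1}^r\Bigl(\frac{|C_j|}{b}-1\Bigr) \;=\; 2(a+k)-r \;>\; 2a+k \;\geq\; a,
\]
so I can pick any $a$ sub-walks as $P_1,\ldots,P_a$ and define $A_i$ and $B_i$ to be, respectively, the first and last $b$ $D$-vertices visited by $P_i$.

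Properties~2 and~3 of the conclusion follow immediately from this construction, and property~4 is automatic since $\bigcup_i A_i\cup B_i$ is a subset of the well-linked set~$D$. The main obstacle will be verifying property~1, the congestion-$2$ bound on $\Pp$: at a non-$D$ vertex the visit count in $\Pp$ is inherited from the congestion-$2$ bound on $\Ll\cup\Ll^\backlink$; at a $D$-vertex $d$ that happens to be a cut point the single visit of $d$ in $W_j$ splits into two endpoint visits in two consecutive sub-walks, but since $d$ belongs to only one closed walk $W_j$ the total count of $d$ in $\Pp$ remains at most~$2$. The crux of the argument is exactly this bookkeeping: by cutting only at $D$-vertices, each lying on a unique $W_j$, we preserve global congestion~$2$ while keeping $A_i\cup B_i\subseteq D$ for well-linkedness.
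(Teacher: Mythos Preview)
Your proposal is correct and follows essentially the same approach as the paper: split $D$ in half, take dual linkages $\Ll,\Ll^\backlink$, observe that $\auxg(\Ll,\Ll^\backlink)$ has fewer than $k$ cycles (else a half-integral cycle packing of size $k$ exists), and then carve the corresponding closed walks in $G$ into $a$ pieces each carrying $2b$ well-linked $D$-vertices. The paper organizes the carving via the backlinkage-induced order (grouping $2b$ consecutive $L_j$'s and calling a group \emph{good} if it stays inside one cycle of $\auxg$) and takes $A_i,B_i$ only from $D_1=\linkfrom{\Ll}$; this sidesteps your cut-point bookkeeping entirely---if you simply let consecutive sub-walks have disjoint $D$-vertex sets (discarding the single segment between them), your congestion argument becomes immediate and the pairwise disjointness of the $A_i,B_i$ is automatic.
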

Lemma~\ref{lem:walk-system} differs from Lemma~\ref{lem:path-system} in a number of ways. 
First, it avoids the quadratic blow-up in the size of the well-linked set (which is linearly lower bounded by directed treewidth by Lemma~\ref{lem:dtw2wl}). 
Second, $\Pp$ is no longer a linkage but a family of walks of congestion $2$. 
Third, there is another assumption that $G$ does not contain a large half-integral packing of cycles; 
we do not know how to avoid this assumption and this assumption is the reason the improvement described here works only in the setting of Theorem~\ref{thm:dtw-ep-all},
not in the general setting of Theorem~\ref{thm:qp}.

\begin{proof}[Proof of Lemma~\ref{lem:walk-system}.]
Partition $D$ into two sets $D_1$ and $D_2$ of size $2(a+k)b$ each.
By well-linkedness, there exists a linkage $\Ll$ from $D_1$ to $D_2$ and a linkage $\Ll^\backlink$ from $D_2$ to $D_1$.
We focus on the auxiliary graph $\auxg(\Ll)$ and a backlinkage-induced
order $\Ll = \{L_1,L_2,\ldots,L_{|\Ll|}\}$. 
Note that $\auxg(\Ll)$ has less than $k$ connected components, since the closed
walks in $G$ corresponding to the cycles of $\auxg(\Ll)$ give rise to a half-integral
packing of cycles in $G$.
We say that an index~$i \in \{1, 2, \ldots, a+k\}$ is \emph{good} if all vertices $L_j$ for $(i-1)\cdot 2b < j \leq i \cdot 2b$ lie on the same cycle of $\auxg(\Ll)$, and \emph{bad} otherwise.
Note that we have less than $k$ bad indices. Let $I$ be a family of exactly $a$ good indices.

For every $i \in I$, we define $P_i$ to be the walk in $G$ that corresponds to the path $\{ L_j~|~(i-1) \cdot 2b < j \leq i \cdot 2b \}$ in $\auxg(\Ll)$.
Furthermore, let $A_i = \{\pathfrom{L_j}~|~(i-1) \cdot 2b < j \leq i \cdot 2b - b\}$ and $B_i = \{\pathfrom{L_j}~|~i \cdot 2b - b < j \leq i \cdot 2b\}$. 
Then clearly $\Pp = \{P_i~|~i \in I\}$ is of congestion $2$; the other required properties are straightforward to verify.
\end{proof}

With Lemma~\ref{lem:walk-system} in hand, we can closely follow the reasoning of Section~\ref{sec:main}.
We first formulate and prove two lemmas which we will reuse in the next section.
We start with the sparse scenario.

\begin{lemma}\label{lem:sparse-win}
Let $a,b,d$ be positive integers with $a$ even and $b \geq 4e \cdot a\cdot d$, and $G$ be a directed graph.
Let $\Pp=P_1,\ldots,P_{a}$ be a set of paths of congestion $\alpha$ such that there exist pairwise disjoint sets $A_i,B_i \subseteq V(P_i)$, $i=1,2,\ldots,a$.
Furthermore, assume that each set $A_i$ and $B_i$ is of size $b\ge 4ead$ and that
for every $1 \leq i \leq a$, all vertices of $A_i$ appear on $P_i$ before all vertices of $B_i$. 
Let $\mathcal{I} = \{(1,2), (2,1), (3,4), (4,3), \ldots, (a-1,a), (a,a-1)\}$. 
For every $(i,j) \in \mathcal{I}$, let $\mathcal{L}_{i,j}$ be a linkage from $B_i$ to $A_j$. 

If for every $(i,j),(i',j') \in \mathcal{I}$, $(i,j) \neq (i',j')$, the intersection graph $I(\Ll_{i,j}, \Ll_{i',j'})$ is $d$-degenerate,
then there exist a family of $a\over 2$ directed cycles of congestion $\alpha +1$.
\end{lemma}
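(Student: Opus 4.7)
My plan is to use Lemma~\ref{lem:degenerate} to select, for each ordered pair $(i,j) \in \mathcal{I}$, a single path from the linkage $\Ll_{i,j}$ in such a way that all $a$ selected paths are pairwise vertex-disjoint in~$G$; the two selected paths associated with the symmetric pair $\{(2i-1, 2i), (2i, 2i-1)\}$ will then close a directed cycle around sub-paths of $P_{2i-1}$ and $P_{2i}$.

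More precisely, I build an auxiliary $a$-colored graph $H$ whose color class indexed by $(i,j) \in \mathcal{I}$ is the vertex set~$\Ll_{i,j}$, and put an edge between two paths from different color classes iff they share a vertex in~$G$. Then for distinct $(i,j), (i',j') \in \mathcal{I}$, the bichromatic subgraph $H[V_{(i,j)} \cup V_{(i',j')}]$ is precisely the intersection graph $I(\Ll_{i,j}, \Ll_{i',j'})$, which is $d$-degenerate by hypothesis, while $|V_{(i,j)}| = b \geq 4ead \geq 4e(a-1)d$. Hence Lemma~\ref{lem:degenerate} (with $r = a$) applies and yields paths $L_{i,j} \in \Ll_{i,j}$ for each $(i,j) \in \mathcal{I}$ that are pairwise vertex-disjoint in~$G$.

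For every $i \in \{1, \ldots, a/2\}$, I then form a directed closed walk $W_i$ by concatenating, in order, (i)~the sub-path of $P_{2i-1}$ from $\pathto{L_{2i, 2i-1}} \in A_{2i-1}$ to $\pathfrom{L_{2i-1, 2i}} \in B_{2i-1}$ (available because $A_{2i-1}$ precedes $B_{2i-1}$ on~$P_{2i-1}$), (ii)~the path $L_{2i-1, 2i}$ from $B_{2i-1}$ to~$A_{2i}$, (iii)~the sub-path of $P_{2i}$ from $\pathto{L_{2i-1, 2i}} \in A_{2i}$ to $\pathfrom{L_{2i, 2i-1}} \in B_{2i}$, and (iv)~the path $L_{2i, 2i-1}$ from $B_{2i}$ back to $A_{2i-1}$. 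Extracting a directed simple cycle $C_i$ from $W_i$ yields the $i$-th cycle. For the congestion bound, I argue as follows: any vertex $v \in V(G)$ lies on at most $\alpha$ of the paths $P_1, \ldots, P_a$, so the $P$-subpath pieces of the $C_i$'s together visit $v$ at most $\alpha$ times, while the linkage pieces $L_{i,j}$ together visit $v$ at most once by their pairwise vertex-disjointness. Hence $\{C_1, \ldots, C_{a/2}\}$ has congestion at most $\alpha + 1$.

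The only subtle points are matching the size bound $b \geq 4ead$ to the requirement $|V_i| \geq 4e(r-1)d$ of Lemma~\ref{lem:degenerate} applied with $r = a$ color classes, and noting that extracting a simple directed cycle from a closed walk never increases the number of visits to any vertex, so the congestion bound transfers from the walks $W_i$ to the cycles $C_i$. Everything else is routine bookkeeping.
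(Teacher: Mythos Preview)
Your proposal is correct and follows essentially the same approach as the paper's own proof: both build the $a$-colored auxiliary graph on the linkages $\Ll_{i,j}$, invoke Lemma~\ref{lem:degenerate} (using $b \geq 4ead \geq 4e(a-1)d$) to select pairwise vertex-disjoint representatives $L_{i,j}$, and then close a walk through $P_{2i-1}$, $L_{2i-1,2i}$, $P_{2i}$, $L_{2i,2i-1}$ for each $i$, extracting a cycle and bounding congestion by $\alpha+1$. Your write-up is in fact slightly more explicit than the paper's about the walk construction and the passage from closed walk to simple cycle.
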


\begin{proof}
%
Create an auxiliary $a$-partite graph $H$ with vertex sets of color classes equal to $\Ll_{i,j}$ for $(i,j) \in \mathcal{I}$.
Between $\Ll_{i,j}$ and $\Ll_{i',j'}$ put the graph $I(\Ll_{i,j},\Ll_{i',j'})$.
By Lemma~\ref{lem:degenerate} and our choice of $b$, there
exists $L_{i,j} \in \Ll_{i,j}$ for every $(i,j) \in \mathcal{I}$
that are independent in $H$.
By the construction of the graph $H$, the paths $L_{i,j}$ for $(i,j) \in \mathcal{I}$ are
pairwise vertex-disjoint.

Fix $\iota \in \{1, 2, \ldots, {a\over 2}\}$ and consider the union $U_\iota$ of $P_{2\iota-1}$, $P_{2\iota}$,
$L_{2\iota-1,2\iota}$, and $L_{2\iota,2\iota-1}$. Observe
that this union contains a closed walk: from the ending vertex of $L_{2\iota,2\iota-1}$
follow $P_{2\iota-1}$ to the starting vertex of $L_{2\iota-1,2\iota}$, then 
follow $L_{2\iota-1,2\iota}$ to the end, then follow $P_{2\iota}$ to the starting
vetex of $L_{2\iota,2\iota-1}$, and follow this path to the end.
Thus, $U_\iota$ contains a cycle $C_\iota$. 
Furthermore, since every vertex can appear at most $\alpha$ times on walks $P_i$ and at most once on paths $L_{i,j}$, every vertex can appear at most $\alpha +1$ times on the cycles $\{ C_\iota~|~1 \leq \iota \leq {a\over 2}\}$.
\end{proof}

For the core of the complementary (dense) situation, we derive the following lemma.
Consider backlinkage-induced order $\Ll = \{L_1,L_2,\ldots,L_{|\Ll|}\}$ for linkage $\Ll$ and the corresponding backlinkage $\Ll^\backlink$.
We say that a walk (path) is an \emph{($\Ll$,$\Ll^\backlink$)-interlaced walk (path) of size $q\ge 1$} if it starts at $\pathfrom{L_j}$ for some $L_j\in \Ll$ and then it has the following structure:
    \[
      L_{j},L_{j}^\backlink, L_{j+1},L_{j+1}^\backlink,\ldots,L_{j+q-1}.
    \]
We may omit the size when it only matters whether such a walk exists.
    
\begin{lemma}\label{lem:dense-win}
  Let $\Ll$ and $\Kk$ be two linkages in a directed graph $G$.
  Let $U_1,\ldots,U_k$ be a set of $k$ walks such that the congestion of $(U_i)_{i=1}^k$ is $\alpha$ and $\mathcal{U}_i$ is the family of paths of $\Ll$ that are subpaths of $U_i$.
  Similarly, let $W_1,\ldots,W_k$ be a set of $k$ walks such that the congestion of $(W_i)_{i=1}^k$ is $\beta$ and $\mathcal{W}_i$ is the family of paths of $\Kk$ that are subpaths of $W_i$.

If for every $1 \le i \leq k$ the average degree of $I(\Ll,\Kk)[\mathcal{U}_i,\mathcal{W}_i]$ is at least 2, then there exists in $G$ a family of $k$ cycles of congestion $\alpha +\beta$.
\end{lemma}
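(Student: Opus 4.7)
The plan is that, for each $i \in [k]$, I will produce a directed cycle $C_i$ entirely inside the directed subgraph $U_i \cup W_i$ of $G$. Once such cycles are exhibited, the congestion bound is immediate: any vertex of $C_i$ lies in $U_i$ or $W_i$, so by the congestion assumptions on these two families of walks, each vertex of $G$ participates in at most $\alpha+\beta$ of the cycles. It therefore suffices to show that $U_i \cup W_i$ contains a directed cycle for every $i$.

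To unpack the average-degree assumption, I observe that the bipartite graph $I(\Ll,\Kk)[\mathcal{U}_i,\mathcal{W}_i]$ has at least $|\mathcal{U}_i|+|\mathcal{W}_i|$ edges and hence is not a forest. I pick a shortest cycle in this bipartite graph; since it is bipartite, the cycle has even length $2m$ with $m\ge 2$, and it gives alternating paths $L^1,K^1,\ldots,L^m,K^m$ together with chosen shared vertices $v_j\in V(L^j)\cap V(K^j)$ and $u_j\in V(K^j)\cap V(L^{j+1})$ (indices cyclic). A short case analysis using only the vertex-disjointness of $\Ll$ and of $\Kk$ will show that the $2m$ vertices $v_1,u_1,\ldots,v_m,u_m$ are pairwise distinct, and moreover that no unordered \emph{$L$-pair} $\{v_j,u_{j-1}\}$ coincides with any unordered \emph{$K$-pair} $\{v_{j'},u_{j'}\}$.

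I then argue by contradiction: suppose $U_i \cup W_i$ is acyclic. Neither walk may revisit a vertex (any revisit immediately produces a cycle), so $U_i$ and $W_i$ are simple directed paths and carry position functions $\pi_U$ and $\pi_W$. A key auxiliary claim is that acyclicity of $U_i \cup W_i$ forces $\pi_U$ and $\pi_W$ to induce the same linear order on $V(U_i)\cap V(W_i)$: any shared pair $a,b$ with $a$ before $b$ in $\pi_U$ but $b$ before $a$ in $\pi_W$ would immediately yield a directed cycle going from $a$ to $b$ via $U_i$ and back via $W_i$. Under this coincidence of orders, and because $\Ll$ is a linkage (so $L^j$ is a contiguous interval of $U_i$ containing only $v_j$ and $u_{j-1}$ among our $2m$ distinguished vertices), the two shared vertices $v_j, u_{j-1}$ of $L^j$ sit at a consecutive pair of positions among the $2m$ vertices in this common linear order; the same holds for $K^j$ with $v_j,u_j$. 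Thus both the $L$-partition and the $K$-partition are partitions of a $2m$-element linearly ordered set into $m$ blocks of consecutive pairs; since there is a \emph{unique} such partition, the two partitions must coincide as families of pairs, contradicting the earlier distinctness of $L$- and $K$-pairs.

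I expect the main obstacle to be the auxiliary claim that acyclicity of $U_i \cup W_i$ forces $\pi_U$ and $\pi_W$ to agree as linear orders on shared vertices, since a hypothetical directed cycle in $U_i \cup W_i$ can have arbitrarily many ``turning points'' where it switches between $U_i$-edges and $W_i$-edges, and this needs to be carefully controlled; the closing pigeonhole argument is then clean once this equivalence is in place.
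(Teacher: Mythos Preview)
Your proof is correct, but it takes a more roundabout route than the paper's. Both arguments begin by observing that $I(\Ll,\Kk)[\mathcal{U}_i,\mathcal{W}_i]$ is not a forest. The paper then uses the orderings of the paths of $\mathcal{U}_i$ along $U_i$ and of $\mathcal{W}_i$ along $W_i$ directly: any non-forest ordered bipartite graph contains a \emph{crossing} pair of edges $L_\alpha K_\delta$, $L_\beta K_\gamma$ with $\alpha<\beta$, $\gamma<\delta$, and from one such crossing the closed walk is built in one line---go forward on $U_i$ from a point of $L_\alpha\cap K_\delta$ to a point of $K_\gamma$, then forward on $W_i$ back to the start in $K_\delta$. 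Your approach instead takes an arbitrary cycle in the intersection graph, names its $2m$ intersection points, assumes $U_i\cup W_i$ is acyclic, deduces that $U_i$ and $W_i$ are simple paths whose orders agree on shared vertices, and finishes with the nice pigeonhole observation that a $2m$-chain admits a unique partition into $m$ consecutive pairs, so the $L$-pairing and $K$-pairing must coincide. Your argument is longer but fully self-contained; the paper's is a two-line construction but tacitly uses the (easy) lemma that a bipartite graph with ordered sides and no crossing edges is a forest.

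One remark: the step you flagged as the main obstacle---that acyclicity of $U_i\cup W_i$ forces $\pi_U$ and $\pi_W$ to agree on shared vertices---is in fact immediate and needs no analysis of ``turning points''. If $a,b\in V(U_i)\cap V(W_i)$ with $\pi_U(a)<\pi_U(b)$ but $\pi_W(b)<\pi_W(a)$, then concatenating the $U_i$-subpath from $a$ to $b$ with the $W_i$-subpath from $b$ to $a$ already gives a nontrivial closed walk, hence a cycle. The genuine combinatorial content of your argument sits in the final consecutive-pairs step, which is clean.
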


\begin{proof}
Fix $i \in \{1,\ldots,k\}$.
Let $L_1,L_2,\ldots$ be the paths of $\mathcal{U}_i$ in the order of their appearance on $U_i$
and let $K_1,K_2,\ldots$ be the paths of $\mathcal{W}_i$ in the order of their appearance on $W_i$.
Since the average degree of $I(\Ll,\Kk)[\mathcal{U}_i,\mathcal{W}_i]$ is at least $2$, this graph is not a forest.
Consequently, there are indices
$\alpha < \beta$ and
$\gamma < \delta$ such that
$L_\alpha K_\delta \in E(I(\Ll,\Kk)[\mathcal{U}_i,\mathcal{W}_i])$ and $L_\beta K_\gamma \in E(I(\Ll,\Kk)[\mathcal{U}_i,\mathcal{W}_i])$. 
Consider the following closed walk $Q_i$ in $G$: 
starting from the intersection of $L_\alpha$ and $K_\delta$, we follow $U_i$
up to the intersection with $K_\gamma$. Then we follow $W_i$
up to the intersection with $L_\alpha$, where we started the walk. 
Let $Q_i'$ be any cycle inside the closed walk $Q_i$.
Thus we obtained $k$ cycles.
Observe that as we build the cycles only using vertices in $\cup_{i=1}^k U_i\cup W_i$, every vertex of $G$ is used at most $\alpha+\beta$ times.
\end{proof}

\medskip

We conclude the proof of case $p=4$ in Theorem~\ref{thm:dtw-ep-all} by a combination of Lemmas~\ref{lem:walk-system},~\ref{lem:sparse-win}, and~\ref{lem:dense-win}.
Let $k$ be an integer and $G$ be a directed graph of $\dtw(G) = \Omega(k^3)$ and suppose, for a contradiction, that no family of $k$ cycles exists such that every vertex of $G$ is in at most four of the cycles.
Let
\begin{align*}
d & := 2^{10} \cdot k, & a & := 2k, & b & := \lceil 4ead \rceil = \Theta(k^2). 
\end{align*}
By Lemma~\ref{lem:dtw2wl}, $G$ contains a well-linked set of size $\Omega(k^3)$. 
We apply Lemma~\ref{lem:walk-system} to $G$ with parameters $a$ and $b$, obtaining
a family $\Pp = \{P_1,P_2,\ldots,P_a\}$ and sets $A_i$, $B_i$ of size $b$ each.

Let $\mathcal{I} = \{(1,2), (2,1), (3,4), (4,3), \ldots, (a-1,a), (a,a-1)\}$. 
For every $(i,j) \in \mathcal{I}$, let $\mathcal{L}_{i,j}$ be a linkage from $B_i$ to $A_j$.
We consider two cases.

In the case where for every $(i,j),(i',j') \in \mathcal{I}$, $(i,j) \neq (i',j')$, the intersection graph $I(\Ll_{i,j}, \Ll_{i',j'})$ is $d$-degenerate we get a contradiction by Lemma~\ref{lem:sparse-win}.
For the remaining case observe that 
there exist a linkage $\Ll \subseteq \Ll_{i,j}$ and a linkage
$\Kk \subseteq \Ll_{i',j'}$ such that $I(\Ll,\Kk)$ has minimum degree more than $d$.
Furthermore, since $\bigcup_{i=1}^a A_i \cup B_i$ is well-linked, there exist a linkage $\Ll^\backlink$ from $B(\Ll)$ to $A(\Ll)$ 
 and an analogous linkage $\Kk^\backlink$ from $B(\Kk)$ to $A(\Kk)$.

We focus on auxiliary graph $\auxg(\Ll)$ and $\auxg(\Kk)$.
Let $\Ll = \{L_1,L_2,\ldots,L_{|\Ll|}\}$ and $\Kk = \{K_1,K_2,\ldots,K_{|\Kk|}\}$
be backlinkage-induced orders of $\Ll$ and $\Kk$.
Let $L_j^\backlink$ be the path of $\Ll^\backlink$ that starts at $\pathto{L_j}$
and similarly define $K_j^\backlink$.
Since $G$ does not admit a quarter-integral packing of cycles of size $k$, we infer that both $\auxg(\Ll)$ and $\auxg(\Kk)$ have each less than $k$ connected components.

We now apply Lemma~\ref{lem:disjointPairs} to $I(\Ll,\Kk)$ with the aforementioned backlinkage-induced
orders of $\Ll$ and $\Kk$, aiming at $3k$ sets
$U_1,\ldots,U_{3k}$ and $3k$ sets $W_1,\ldots,W_{3k}$ such that
$I(\Ll,\Kk)[U_i,W_i]$ has average degree at least $2$
for every $1 \leq i \leq 3k$.

An index $1 \leq i \leq 3k$ is bad if either $U_i$ is not contained in a single cycle of $\auxg(\Ll)$
or $W_i$ is not contained in a single cycle of $\auxg(\Kk)$. 
By our orderings of $\Ll$ and $\Kk$, there are less than $2k$ bad indices. 
Let $I \subseteq [3k]$ be a family of exactly $k$ indices that are not bad.
We can now use Lemma~\ref{lem:dense-win}.
For each $i\in I$, $U_i$ can be turned into ($\Ll$,$\Ll^\backlink$)-interlaced walk $U_i'$.
Similarly each $W_i$ can be turned into ($\Kk$,$\Kk^\backlink$)-interlaced walk $W_i'$.
 $k$ blow-upThe congestion of $(U'_i)_{i \in I}$ is two as it is composed of two linkages, and similarly for $(W'_i)_{i \in I}$. 
Therefore we obtain a quarter-integral cycle packing of size $k$, a contradiction.
This finishes the proof of case $p=4$ in Theorem~\ref{thm:dtw-ep-all}.

\subsection{Case $p=2$}\label{ss:imp2}

First, we prove a lemma that serves as a key technique to lower the congestion.

\begin{lemma}[Untangling Lemma]\label{lem:untangle}
  Let $G$ be a directed graph, let $q, k \geq 1$ be integers, and let $D_1,D_2$ be two vertex sets of size $q(2k-2)+1$ each.
Let $\Ll$ be linkage from $D_1$ to $D_2$ of size $q(2k-2)+1$ in graph $G$ and $\Ll^\backlink$ be a corresponding back-linkage of size $q(2k-2)+1$, too. 
If $G$ does not admit a half-integral packing of $k$ cycles, then 
$G$ contains an ($\Ll$,$\Ll^\backlink$)-interlaced path of size $q$.
\end{lemma}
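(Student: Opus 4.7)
The plan is to prove the contrapositive: assuming $G$ contains no $(\Ll,\Ll^\backlink)$-interlaced path of size $q$, I construct a half-integral packing of $k$ cycles in $G$, contradicting the hypothesis.

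I would first decompose $\auxg := \auxg(\Ll,\Ll^\backlink)$ into its vertex-disjoint directed cycles $C_1,\ldots,C_m$ of lengths $\ell_i$ satisfying $\sum_i \ell_i = |\Ll| = q(2k-2)+1$. Each $C_i$ yields a closed walk $\hat{C}_i$ in $G$ obtained by concatenating the paths of $\Ll$ and $\Ll^\backlink$ along $C_i$; since $\Ll$ and $\Ll^\backlink$ are each linkages, the family $\{\hat{C}_i\}_i$ has congestion at most $2$ in $G$. The hypothesis then lets me extract many ``bad chords'' inside each long $C_i$: for each $i$ with $\ell_i\ge q$, partition the positions of $C_i$ cyclically into $\lfloor \ell_i/q\rfloor$ disjoint consecutive blocks of length $q$ (plus possibly one shorter leftover). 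For each full block starting at position $j_0$, the walk $L_{j_0}, L_{j_0}^\backlink, \ldots, L_{j_0+q-1}$ is not a simple path by hypothesis; since paths in $\Ll$ are pairwise disjoint, as are those in $\Ll^\backlink$, the only way this walk revisits a vertex is via some $L_a$ and $L_b^\backlink$ with $a,b$ in the block sharing a vertex away from the natural transitions. Pick one such bad chord per block, obtaining $\lfloor \ell_i/q\rfloor$ chords whose supports lie in pairwise disjoint (hence non-crossing) arcs of $C_i$.

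Next, I would split $\hat{C}_i$ at each chosen chord simultaneously. Each bad chord identifies a vertex $v$ visited twice by $\hat{C}_i$, and cutting $\hat{C}_i$ at $v$ produces two closed sub-walks meeting only at $v$; by induction on the number of chords (each further non-crossing chord lies entirely in exactly one of the two sub-walks from the previous split), $\lfloor \ell_i/q\rfloor$ splits yield $\lfloor \ell_i/q\rfloor+1$ closed sub-walks. For $C_i$ with $\ell_i<q$ I would keep $\hat{C}_i$ as one unsplit closed walk. From each resulting closed sub-walk I extract an inscribed simple cycle of $G$. A direct count, using $\lfloor x\rfloor \ge x-1$ and $\sum_{\ell_i<q}\ell_i \le (m-m_1)(q-1)$ where $m_1 := |\{i:\ell_i\ge q\}|$, gives that the total number of extracted simple cycles is at least
\[
m+\sum_{i:\ell_i\geq q}\lfloor \ell_i/q\rfloor \;\geq\; \lceil |\Ll|/q\rceil \;=\; 2k-1 \;\geq\; k.
\]
Since each sub-walk is a subset of some $\hat{C}_i$ and $\{\hat{C}_i\}$ has congestion at most $2$ in $G$, the family of extracted simple cycles also has congestion at most $2$, yielding a half-integral packing of $k$ cycles in $G$, the desired contradiction.

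The main obstacle I anticipate is the bookkeeping in the splitting step---verifying that the simultaneous splits at non-crossing chords inside one $\hat{C}_i$ actually produce $\lfloor \ell_i/q\rfloor+1$ non-trivial closed sub-walks, and that the congestion budget is not violated across different $\auxg$-components. Both follow from the fact that every vertex of $G$ lies in at most one path of $\Ll$ and at most one path of $\Ll^\backlink$: any doubly visited vertex of $\bigcup_i \hat{C}_i$ is witnessed by exactly one $L$--$L^\backlink$ pair, which must lie in a single $\hat{C}_i$, so splits within a single component cannot push the global congestion above $2$.
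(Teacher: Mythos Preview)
Your proof is correct and takes a genuinely different route from the paper's. The paper argues \emph{directly} via a greedy partition: it walks along the backlinkage-induced order, starting a new walk~$H_i$ each time the previous one either reaches the end of a cycle of $\auxg(\Ll)$ or self-intersects. Fewer than $k$ walks can end at a cycle boundary (else $\auxg(\Ll)$ has $\geq k$ components, yielding $k$ half-integral cycles), and fewer than $k$ can end by self-intersection (each such $H_i$ contains a cycle, and together they have congestion~$2$). Hence there are at most $2k-2$ walks covering all $q(2k-2)+1$ paths of $\Ll$, and pigeonhole gives one $H_i$ of size $\geq q+1$, whose simple prefix is the desired interlaced path.

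You instead run the contrapositive: you fix a block partition up front, use the absence of any interlaced path of size $q$ to force a repeated vertex in every full block, and then explicitly split each $\hat{C}_i$ at these chord vertices. This buys you an explicit half-integral packing of $2k-1$ cycles (more than the $k$ needed), at the cost of heavier bookkeeping---you must verify that the chords are distinct and lie in disjoint arcs (which follows since congestion~$2$ forces each chord vertex to appear exactly twice, both times inside its block), that the resulting $\lfloor \ell_i/q\rfloor + 1$ sub-walks are all non-trivial (the outermost one always contains at least the $\Ll^\backlink$-path omitted from the last block), and that splitting preserves the total vertex-occurrence multiset. All of this checks out. The paper's greedy argument is shorter and sidesteps these verifications; your block-wise argument is more constructive and squeezes out a stronger count.
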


\begin{proof}
We iteratively define subgraphs $H_1,H_2,\ldots$ of $G$ using the following greedy process.
Let $\Ll = \{L_1,L_2,\ldots \}$ be the backlinkage-induced order of $\Ll$.
Fix $i \geq 1$ and assume that all $H_{i'}$ for $1 \leq i' < i$ have been defined.
Let $j$ be smallest index such that $L_j$ was not used for construction of $H_{i'}$ for any $i'$ with $1\le i'<i$.
The subgraph $H_i$ is defined as the following walk.
Starting in $\pathfrom{L_j}$, we follow $L_j$, the path of $\Ll^\backlink$ from $\pathto{L_j}$ to $\pathfrom{L_{j+1}}$, 
$L_{j+1}$, etc.,  until we reach either an end of a cycle of $\auxg(\Ll)$ or a self-intersection of the walk. 
In the latter case, let $H_i$ be the walk from $\pathfrom{L_j}$ up to and including the last arc leading to the self-intersection.
We measure the \emph{size of $H_i$} as the number of vertices paths $L_{j'}$ for which we passed $\pathfrom{L_{j'}}$ in the construction.

Now, we observe that as $\mathcal{H} := \{H_1,H_2,\ldots\}$ is created using $\Ll$ and $\Ll^\backlink$ only, so it has congestion $2$.
Furthermore, every $H_i$ whose greedy process ended because of a self-intersection contains a cycle. 
Since $G$ does not contain a half-integral packing of $k$ cycles, $\auxg(\Ll)$ has less than $k$ cycles and thus for less than $k$ walks $H_i$ the greedy process ended because
of a self-intersection. Consequently, $|\mathcal H| \leq 2k-2$.
Hence, there exists $H_i\in \mathcal H$ of size at least $q+1$.
It follows that $H_i$ contains the desired ($\Ll$,$\Ll^\backlink$)-interlaced path of size $q$. 
\end{proof}

Second, we give an analog of Lemma~\ref{lem:walk-system} that serves as a replacement of Lemma~\ref{lem:path-system} in this section.
This time, we trade linear blow-up in the exponent for no congestion.

\begin{lemma}\label{lem:disj-walk-system}
  Let $G$ be a directed graph, let $a, b, k \geq 1$ be integers, and let $D$ be a well-linked set in $G$ of size $2(ab(2k-2)+1)$. 
If $G$ does not contain a family of $k$ cycles such that every vertex of $G$ is in at most two of the cycles, then 
there exists a family $\Pp = \{P_1,P_2,\ldots,P_a\}$ of paths in $G$ and sets $A_i, B_i \subseteq V(P_i)$ for every $1 \leq i \leq a$ such that
\begin{enumerate}
\item the paths in $\Pp$ are mutually disjoint,
\item the sets $A_i$ and $B_i$ are of size $b$ each and are pairwise disjoint, 
\item for every $1 \leq i \leq a$, all vertices of $A_i$ appear on $P_i$ before all vertices of $B_i$, and
\item $\bigcup_{i=1}^a A_i \cup B_i$ is well-linked in $G$.
\end{enumerate}
\end{lemma}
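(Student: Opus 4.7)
The plan is to follow the general scheme of the proof of Lemma~\ref{lem:walk-system}, but replace the direct use of the auxiliary graph $\auxg(\Ll)$ (which naturally yields congestion-$2$ walks) by a single invocation of the Untangling Lemma (Lemma~\ref{lem:untangle}), which is exactly designed to convert the interlaced structure into an honest \emph{path} at the cost of a factor-$(2k-2)$ blow-up in the required starting set size. Concretely, I first partition $D$ into two halves $D_1, D_2$ of size $ab(2k-2)+1$ each. Well-linkedness of $D$, applied with $A = D_1$ and $B = D_2$ (so that $D \setminus (A \cup B) = \emptyset$), yields a linkage $\Ll$ from $D_1$ to $D_2$ and a dual back-linkage $\Ll^\backlink$ from $D_2$ to $D_1$ in $G$, both of size $ab(2k-2)+1$. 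Since by hypothesis $G$ admits no half-integral packing of $k$ cycles, Lemma~\ref{lem:untangle} with parameter $q := ab$ applies and furnishes an $(\Ll, \Ll^\backlink)$-interlaced path $\pi$ of size $ab$ in $G$.

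Write $\pi$ in the backlinkage-induced order as $L_{j_0}, L_{j_0}^\backlink, L_{j_0+1}, L_{j_0+1}^\backlink, \ldots, L_{j_0+ab-1}$, and let $a_1, a_2, \ldots, a_{2ab}$ be the sequence of anchor vertices visited by $\pi$, where $a_{2m-1} := \pathfrom{L_{j_0+m-1}}$ and $a_{2m} := \pathto{L_{j_0+m-1}}$. All $2ab$ anchors are pairwise distinct elements of $D = D_1 \cup D_2$, the distinctness being immediate since $\pi$ is a simple path. For each $i \in [a]$, take $P_i$ to be the subpath of $\pi$ from $a_{(i-1)\cdot 2b+1}$ to $a_{i\cdot 2b}$ (inclusive), and define
\[
  A_i := \{a_{(i-1)\cdot 2b+1},\ldots,a_{(i-1)\cdot 2b+b}\}, \qquad B_i := \{a_{(i-1)\cdot 2b+b+1},\ldots,a_{i\cdot 2b}\}.
\]

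Verification is routine. The paths $P_1,\ldots,P_a$ are pairwise vertex-disjoint because they are subpaths of the simple path $\pi$ indexed by disjoint anchor ranges, and the intermediate bridging portions of $\pi$ between consecutive anchor ranges are simply discarded. Each $A_i$ and $B_i$ has size exactly $b$; they are pairwise disjoint across $i$ because all anchors are distinct; and $A_i$ precedes $B_i$ on $P_i$ by construction. Finally, $\bigcup_{i=1}^{a}(A_i \cup B_i) \subseteq D$, and a subset of a well-linked set is itself well-linked (immediate from the definition, since deleting fewer vertices of the surrounding set only makes the required linkages easier to obtain). The only genuine technical content of this proof already sits inside the Untangling Lemma; the main obstacle for the present statement is therefore just ensuring that the numerical bound on $|D|$ is tuned to match the precondition $q(2k-2)+1$ of Lemma~\ref{lem:untangle} with $q = ab$, which is why $|D|$ is declared to be $2(ab(2k-2)+1)$ in the statement.
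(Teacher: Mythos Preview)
Your proof is correct and follows essentially the same approach as the paper: partition $D$ into two halves, take a linkage and back-linkage between them, apply the Untangling Lemma with $q = ab$ to obtain a single interlaced simple path, and then chop it into $a$ consecutive subpaths each carrying $2b$ anchor vertices of $D$, split as $A_i$ and $B_i$. Your write-up is slightly more explicit about the anchor indexing and about why subpaths of a simple path are vertex-disjoint, but the argument is the same as the paper's.
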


\begin{proof}
We partition $D$ into two equal sets $D_1$ and $D_2$ of size $ab(2k-2)+1$ each.
By well-linkedness, there exists a linkage $\Ll$ from $D_1$ to $D_2$ and a backlinkage $\Ll^\backlink$ from $D_2$ to $D_1$.
This gives us the backlinkage-induced order $\Ll = \{L_1,L_2,\ldots,L_{|\Ll|}\}$. 
We immediately use Lemma~\ref{lem:untangle} with $q=ab$.

As $G$ does not contain a half-integral packing of $k$ cycles, we obtain an $(\Ll,\Ll^\backlink)$-interlaced path $P$
that contains at least $2ab$ vertices in $D=\{\pathfrom{L_j} \mid L_j\in\Ll\}\cup\{\pathto{L_j} \mid L_j\in\Ll\}$.
For every $j \in \{1, 2, \ldots, a\}$, we define $P_j$ to be $j$-th subpath of $P$ containing exactly $2b$ consecutive vertices from set $D$; the define $A_i$ to be the set of the first $b$
of these vertices and $B_i$ to be the set of the last $b$ of these vertices.
Then it is straightforward to verify that $\Pp$ satisfy the required properties.
\end{proof}

\medskip

We conclude the proof of case $p=2$ in Theorem~\ref{thm:dtw-ep-all} by combination of Lemmas~\ref{lem:untangle},~\ref{lem:disj-walk-system},~\ref{lem:sparse-win}, and~\ref{lem:dense-win}.

Let $k$ be an integer and $G$ be a directed graph of $\dtw(G) = \Omega(k^5)$ and suppose, for a contradiction, that no family of $k$ cycles exists such that every vertex is in at most two of the cycles.
Let
\begin{align*}
  d & := 3 \cdot 2^{10}\cdot k, & a & := 2k, & q & := \lceil 4ead \rceil, & b & := 2(q(2k-2)+1) = \Theta(k^3). 
\end{align*}
By Lemma~\ref{lem:dtw2wl}, $G$ contains a well-linked set of size $\Omega(k^5)$. 
We apply Lemma~\ref{lem:disj-walk-system} to $G$ with parameters $a$ and $b$, obtaining
a family $\Pp = \{P_1,P_2,\ldots,P_a\}$ and sets $A_i$, $B_i$ of size $b$ each.

Let $\mathcal{I} = \{(1,2), (2,1), (3,4), (4,3), \ldots, (a-1,a), (a,a-1)\}$. 
For every $(i,j) \in \mathcal{I}$, let $\mathcal{L}_{i,j}$ be a linkage from $B_i$ to $A_j$ and $\mathcal{L}^\backlink_{i,j}$ is the corresponding linkage back (which exists due to well-linkedness of $\bigcup_{i=1}^a A_i \cup B_i$). 
Now we will untangle all such linkages using Lemma~\ref{lem:untangle}.
We apply Lemma~\ref{lem:untangle} on each $(i,j)\in \mathcal{I}$ separately with the parameter $q$,
obtaining an $(\Ll_{i,j},\Ll_{i,j}^\backlink)$\nobreakdash-interlaced path $Q_{i,j}$ containing at least $q$ vertices in $A_i$ and at least $q$ vertices in $B_j$.
Let $\mathcal{Q}_{i,j}$ be the sublinkage of $\Ll_{i,j}$ consisting of $q$ paths contained in $Q_{i,j}$.
We consider two cases.

In the case where, for every $(i,j),(i',j') \in \mathcal{I}$, $(i,j) \neq (i',j')$, the intersection graph $I(\mathcal{Q}_{i,j}, \mathcal{Q}_{i',j'})$ is $d$-degenerate we get a contradiction by Lemma~\ref{lem:sparse-win} as $\Pp$ has congestion one.
In the remaining case, fix two distinct $(i,j),(i',j') \in \mathcal{I}$ such that $I(\mathcal{Q}_{i,j},\mathcal{Q}_{i',j'})$ is not $d$-degenerate.
We have a linkage $\mathcal{Q}_1 \subseteq \mathcal{Q}_{i,j}$ and a linkage
$\mathcal{Q}_2 \subseteq \mathcal{Q}_{i',j'}$ such that $I(\mathcal{Q}_1,\mathcal{Q}_2)$ has minimum degree more than~$d$.
We now apply Lemma~\ref{lem:disjointPairs} to $I(\mathcal{Q}_1,\mathcal{Q}_2)$, aiming at $k$ sets $U_1,\ldots,U_k$ and $k$ sets $W_1,\ldots,W_k$ such that $I(\mathcal{Q}_1,\mathcal{Q}_2)[U_\iota,W_\iota]$ has average degree at least $2$ for every $\iota \in \{1, 2, \ldots, k\}$.
In the application of Lemma~\ref{lem:disjointPairs}, the paths in $\mathcal{Q}_1$ and $\mathcal{Q}_2$ are ordered as in $Q_{i,j}$ and $Q_{i',j'}$, respectively. 
Hence, all paths in $U_\iota$ are contained in a subpath $U_{\iota}'$ of $Q_{i,j}$ and the paths $(U_\iota')_{\iota = 1}^k$ are vertex-disjoint.
Similarly, 
all paths in $W_\iota$ are contained in a subpath $W_{\iota}'$ of $Q_{i',j'}$ and the paths $(W_\iota')_{\iota = 1}^k$ are vertex-disjoint.
We can now use Lemma~\ref{lem:dense-win} to get a contradiction. Thus case $p=2$ of Theorem~\ref{thm:dtw-ep-all} holds. 

\subsection{Case $p=3$}\label{ss:imp3}
We conclude with a remark that we can combine both approaches.
If we use Lemma~\ref{lem:walk-system} instead of Lemma~\ref{lem:disj-walk-system} in the proof of case $p=2$ of Theorem~\ref{thm:dtw-ep-all}, we are guaranteed only $k$ one-third-integral cycles (as $\Pp$ has congestion 2 instead of 1 while using Lemma~\ref{lem:sparse-win}) but we save blow-up by factor $k$ in the bound on directed treewidth.
Hence, we obtain the statement of Theorem~\ref{thm:dtw-ep-all} for $p=3$.

\section{Conclusions}
We have shown that if one relaxes the disjointness constraint to half- or quarter-integral packing (i.e., every vertex used at most two or four times, respectively), then
the Erd\H{o}s-P\'{o}sa property in directed graphs admits a polynomial bound between
the cycle packing number and the feedback vertex set number. 
The obtained bound for quarter-integral packing is smaller than the one for half-integral packing.
A natural question would be to decrease the dependency further, even at the cost of higher
congestion (but still a constant). 
More precisely, we pose the following question: Does there exist a constant $c$ and a polynomial $p$ such that
for every integer $k$ if a directed graph $G$ does not contain a family of $k$ cycles
such that every vertex of $G$ is in at most $c$ of the cycles, then 
the directed treewidth of $G$ is at most $k p(\log k)$?

One of the sources of polynomial blow-up in the proof of Theorem~\ref{thm:qp}
is the quadratic blow-up in Lemma~\ref{lem:path-system}. 
Lemma~\ref{lem:path-system} is a direct corollary of another result 
of~\cite{DBLP:journals/corr/KawarabayashiK14} that asserts that a directed graph $G$
of directed treewidth $\Omega(k^2)$ contains a path $P$ and a set $A \subseteq V(P)$
that is well-linked and of size $k$. 
Is this quadratic blow-up necessary? Can we improve it, even at the cost of some constant
congestion in the path $P$ (i.e., allow $P$ to visit every vertex a constant number of times)?
We remark that the essence of the improvement from $\Oh(k^6 \log^2 k)$ (obtained by setting $b=2$ in Theorem~\ref{thm:qp}) to $\Oh(k^3)$ asserted by Theorem~\ref{thm:dtw-ep-all} for $p=4$ is to avoid the usage
of Lemma~\ref{lem:path-system} and to replace it with a simple well-linkedness trick.
However, this trick fails in the general setting of Theorem~\ref{thm:qp}.

\paragraph{Acknowledgments.} We thank Stephan Kreutzer (TU Berlin) for
interesting discussions on the topic and for pointing out
Lemma~\ref{lem:dtw-cp-fvs}.

\marginpar{\includegraphics[height=25px]{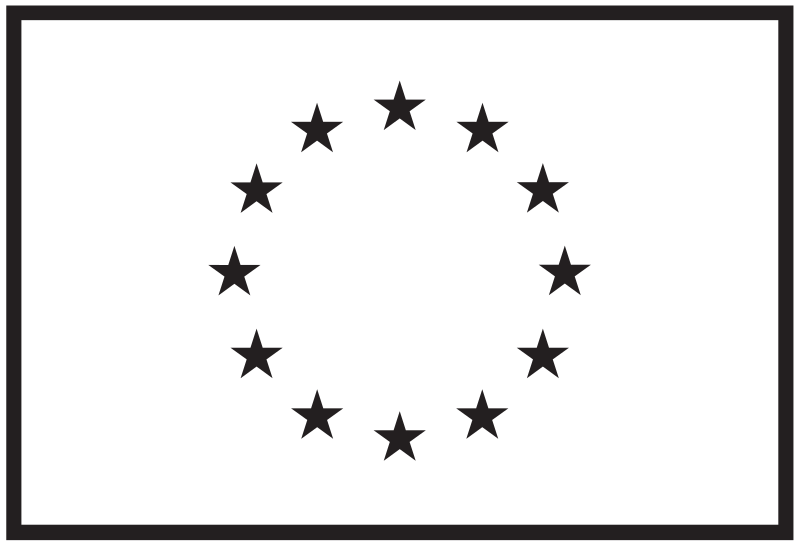}\hspace{.5cm}\includegraphics[height=25px]{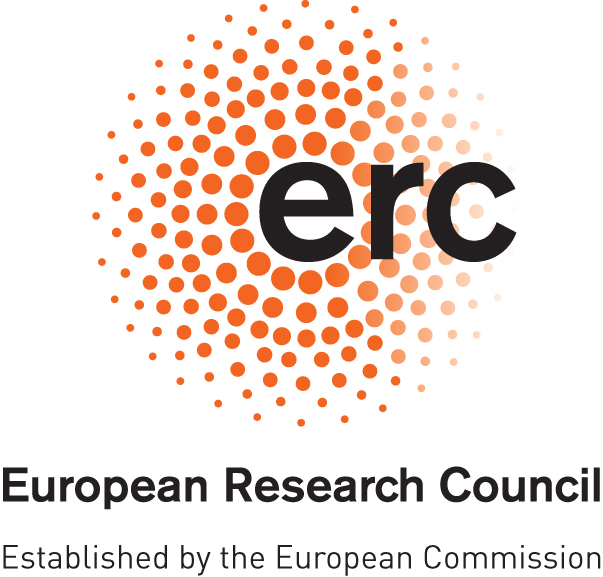}}

This research is part of projects that have received funding from the
European Research Council (ERC) under the European Union's Horizon
2020 research and innovation programme Grant Agreements 648527 (Irene
Muzi) and 714704 (all authors). 

Tom\'{a}\v{s} Masa\v{r}\'{i}k was also supported by student grant
number SVV–2017–260452 of Charles University, Prague, Czech Republic.

\bibliographystyle{abbrv}
\bibliography{main}

\end{document}